\newcommand{\change}[1]{
	{#1}}
\newcommand{\rem}[1]{}
\newcommand{\vir}{``}
\newcommand{\nop}[1]{}
\newcommand{\DC}{\mathsf{C}} 
\newcommand{\DV}{\mathsf{V}} 
\newcommand{\DN}{\mathsf{N}} 
\newcommand{\vars}{\mathit{vars}} 
\newcommand{\uvars}{\mathit{vars_\forall}} 
\newcommand{\exvars}{\mathit{vars_\exists}} 
\newcommand{\front}{\mathsf{vars_\curvearrowright}} 
\newcommand{\head}{\mathit{head}} 
\newcommand{\body}{\mathit{body}} 
\newcommand{\arity}{\mathit{arity}} 
\newcommand{\dom}{\mathit{dom}} 
\newcommand{\lev}{\mathit{lev}}
\newcommand{\const}{\mathit{consts}} 
\newcommand{\preds}{\mathit{preds}} 
\newcommand{\hdpred}{\mathit{h}\textrm{-}\preds} 
\newcommand{\bdpred}{\mathit{b}\textrm{-}\preds} 
\newcommand{\pos}{\mathit{pos}} 
\newcommand{\sch}{\mathit{sch}} 
\newcommand{\mods}{\mathit{mods}} 
\newcommand{\chase}{\mathit{chase}} 
\newcommand{\aff}{\mathit{aff}} 
\newcommand{\nonaff}{\mathit{nonaff}} 
\newcommand{\dang}{\mathit{dang}} 
\newcommand{\harml}{\mathit{harmless}} 
\newcommand{\harmf}{\mathit{harmful}} 
\newcommand{\sigc}{\Sigma_\mathcal{C}} 
\newcommand{\sighg}{\Sigma_\mathrm{HG}} 
\newcommand{\dcompl}{\ensuremath{D^+}\xspace}
\newcommand{\C}{\ensuremath{\mathcal{C}}\xspace} 
\newcommand{\problematic}{\textit{problematic}\xspace} 
\newcommand{\safe}{\textit{safe}\xspace} 
\newcommand{\satoms}{\ensuremath{\mathit{s}\textrm{-}\mathit{atoms}}\xspace}
\newcommand{\patoms}{\ensuremath{\mathit{p}\textrm{-}\mathit{atoms}}\xspace}
\newcommand{\bridge}{\vars_\star} 
\newcommand{\mainrule}{\mathit{main}} 
\newcommand{\hgrule}{\mathit{hg}} 
\newcommand{\Aux}{\mathtt{Aux}} 
\newcommand{\grAt}{\ensuremath{\mathit{gra}}\xspace} 
\newcommand{\grAtDP}{\ensuremath{\mathit{gra}(D,\Pi)}\xspace} 
\newcommand{\HB}{\ensuremath{\mathit{HB}}\xspace} 
\newcommand{\HF}{\ensuremath{\mathit{hf}}\xspace} 
\newcommand{\certEval}{\ensuremath{\textsc{cert-eval}}}
\newcommand{\certEvalC}{\ensuremath{\textsc{cert-eval}[\C]}\xspace}
\newcommand{\certEvalDyaC}{\ensuremath{\textsc{cert-eval}[\dyac]}\xspace}
\newcommand{\dpcertEvalC}{\ensuremath{\textsc{dp-cert-eval}[\C]}\xspace}
\newcommand{\algCompleteC}{\ensuremath{\textsf{Complete}_{[\C]}}\xspace}
\newcommand{\algDpCertEvalC}{\ensuremath{\textsf{DpCertEval}_{[\C]}}\xspace}
\newcommand{\algCertEvalDyaC}{\ensuremath{\textsf{CertEval}_{[\dyac]}}\xspace}
\newcommand{\cert}{\mathit{cert}} 
\newcommand{\dpcert}{\ensuremath{\mathit{dp}\textrm{-}\cert}\xspace} 
\newcommand{\dpchase}{\ensuremath{\mathit{dp}\textrm{-}\chase}\xspace} 
\newcommand{\datalog}{\ensuremath{\textsf{Datalog}}\xspace}
\newcommand{\datex}{\ensuremath{\textsf{Datalog}^{_\exists}}\xspace}
\newcommand{\linear}{\ensuremath{\textsf{Linear}}\xspace}
\newcommand{\indep}{\ensuremath{\textsf{Inclusion-Dependencies}}\xspace}
\newcommand{\joinless}{\ensuremath{\textsf{Joinless}}\xspace}
\newcommand{\sticky}{\ensuremath{\textsf{Sticky}}\xspace}
\newcommand{\stickyj}{\ensuremath{\textsf{Sticky-(Join)}}\xspace}
\newcommand{\prot}{\ensuremath{\textsf{Protected}}\xspace}
\newcommand{\shy}{\ensuremath{\textsf{Shy}}\xspace}
\newcommand{\ward}{\ensuremath{\textsf{Ward}}\xspace}
\newcommand{\joinacy}{\ensuremath{\textsf{Jointly-Acyclic}}\xspace}
\newcommand{\weaklyacy}{\ensuremath{\textsf{Weakly-Acyclic}}\xspace}
\newcommand{\guarded}{\ensuremath{\textsf{Guarded}}\xspace}
\newcommand{\frguarded}{\ensuremath{\textsf{(Fr)-Guarded}}\xspace}
\newcommand{\wguarded}{\ensuremath{\textsf{Weakly-Guarded}}\xspace}
\newcommand{\wfrguarded}{\ensuremath{\textsf{Weakly-(Fr)-Guarded}}\xspace}
\newcommand{\dyac}{\ensuremath{\textsf{Dyadic-}\mathcal{C}}\xspace}
\newcommand{\dfinds}{\ensuremath{\textsf{Af-Inds}}\xspace}
\newcommand{\fes}{\ensuremath{\textsf{FES}}\xspace} 
\newcommand{\fus}{\ensuremath{\textsf{FUS}}\xspace} 
\newcommand{\bts}{\ensuremath{\textsf{BTS}}\xspace} 
\newcommand{\ps}{\ensuremath{\textsf{SPS}}\xspace} 
\newcommand{\ACzero}{\ensuremath{{\footnotesize \textbf{AC}_0}}\xspace}
\newcommand{\NP}{{\footnotesize \textbf{NP}}\xspace}
\newcommand{\PTIME}{{\footnotesize \textbf{PTIME}}\xspace}
\newcommand{\EXPTIME}{{\footnotesize \textbf{EXPTIME}}\xspace}
\newcommand{\TwoEXPTIME}{{\footnotesize \textbf{2EXPTIME}}\xspace}
\newcommand{\iEXPTIME}{{\footnotesize \textbf{i-\EXPTIME}}\xspace}
\newcommand{\iEXPSPACE}{{\footnotesize \textbf{i-EXPSPACE}}\xspace}
\newcommand{\PSPACE}{{\footnotesize \textbf{PSPACE}}\xspace}
\newcommand{\PTIMEc}{\ensuremath{\PTIME{\footnotesize\textrm{-complete}}}\xspace}
\newcommand{\PTIMEh}{\ensuremath{\PTIME{\footnotesize\textrm{-hard}}}\xspace}
\newcommand{\NPTIMEc}{\ensuremath{\NP{\footnotesize\textrm{-complete}}}\xspace}
\newcommand{\NPTIMEh}{\ensuremath{\NP{\footnotesize\textrm{-hard}}}\xspace}
\newcommand{\EXPTIMEc}{\ensuremath{\EXPTIME{\footnotesize\textrm{-complete}}}\xspace}
\newcommand{\TwoEXPTIMEc}{\ensuremath{\TwoEXPTIME{\footnotesize\textrm{-complete}}}\xspace}
\newcommand{\PSPACEc}{\ensuremath{\PSPACE{\footnotesize\textrm{-complete}}}\xspace}
\newtheorem{theorem}{Theorem}
\newtheorem{lemma}{Lemma}
\newtheorem{proposition}{Proposition}
\newtheorem{corollary}{Corollary}
\newtheorem{definition}{Definition}
\newtheorem{example}{Example}
\begin{document}


\lefttitle{Cambridge Author}

\jnlPage{1}{8}
\jnlDoiYr{2021}
\doival{10.1017/xxxxx}

\title[Dyadic Existential Rules]{Dyadic Existential Rules
}

\begin{authgrp}
	\author{\sn{Gottlob} \gn{Georg}}
	\affiliation{Department of Computer Science, University of Oxford, United Kingdom}
	\affiliation{Faculty of Informatics, TU Wien, Austria}
	\author{\sn{Manna} \gn{Marco}, \sn{Marte} \gn{Cinzia}}
	\affiliation{Department of Mathematics and Computer Science, University of Calabria, Italy}
\end{authgrp}

\history{\sub{xx xx xxxx;} \rev{xx xx xxxx;} \acc{xx xx xxxx}}

\maketitle

\newcommand{\mybox}{{\scriptsize\ensuremath{\hfill\blacksquare}\normalsize}}
\begin{abstract}
Existential rules form an expressive $\datalog$-based language to specify ontological knowledge.
The presence of existential quantification in rule-heads, however, makes the main reasoning tasks undecidable.
To overcome this limitation, in the last two decades, a number of classes of existential rules guaranteeing the decidability of query answering have been proposed.
Unfortunately, only some of these classes fully encompass $\datalog$ and, often, this comes at the price of higher computational complexity.
Moreover, expressive classes are typically unable  to exploit tools developed for classes exhibiting lower expressiveness.  
To mitigate these shortcomings,
this paper introduces a novel general syntactic condition that allows us to define, systematically and in a uniform way, from any decidable class $\mathcal{C}$ of existential rules,
a new class called \dyac~enjoying the following properties:
$(i)$ it is decidable;
$(ii)$ it generalises $\datalog$;
$(iii)$ it generalises $\mathcal{C}$; 
$(iv)$ it can effectively exploit any reasoner for query answering over $\mathcal{C}$; and
$(v)$ its computational complexity does not exceed the highest between the one of $\mathcal{C}$ and the one of $\datalog$. Under consideration in Theory and Practice of
Logic Programming (TPLP).
%
\end{abstract}

\begin{keywords} 
	Existential rules,
	Datalog,
	ontology-based query answering,
	tuple-generating dependencies,
	computational complexity.
\end{keywords}


\section{Introduction}
In ontology-based query answering, a conjunctive query is typically evaluated over a logical theory consisting of a relational database paired with an ontology. Description Logics~(\citeauthor{baader2003description} \citeyear{baader2003description}) and Existential Rules ---a.k.a. tuple generating dependencies, or $\datalog^\pm$ rules---(\citeauthor{baget2011rules} \citeyear{baget2011rules}) are the main languages used to specify ontologies. In particular, the latter are essentially classical datalog rules (\citeauthor{abiteboul1995foundations} \citeyear{abiteboul1995foundations}) extended with existential quantified variables in rule-heads.
The presence of existential quantification in the head of rules, however, makes query answering undecidable in the general case.
To overcome this limitation, in the last two decades, a number of classes of existential rules---based on both semantic and syntactic conditions---that guarantee the decidability of query answering have been proposed.
Concerning the semantic conditions, we recall 
{\em finite expansions sets},
{\em finite treewidth sets},
{\em finite unification sets}, and
{\em strongly parsimonious sets}~(\citeauthor{baget2009extending} \citeyear{baget2009extending}; \citeauthor{baget2011rules} \citeyear{baget2011rules}; \citeauthor{Leone2019} \citeyear{Leone2019}).
Each of these classes encompasses a number of concrete classes based on syntactic conditions (\citeauthor{baget2011rules} \citeyear{baget2011rules};
\citeauthor{cali2013taming} \citeyear{cali2013taming};
\citeauthor{fagin2005data} \citeyear{fagin2005data}; 
\citeauthor{krotzsch2011extending} \citeyear{krotzsch2011extending};
\citeauthor{ceri1989you} \citeyear{ceri1989you}; 
\citeauthor{Leone2019} \citeyear{Leone2019};
\citeauthor{gottlob2015beyond} \citeyear{gottlob2015beyond}; 
\citeauthor{protected} \citeyear{protected};
\citeauthor{DBLP:journals/ai/CaliGP12} \citeyear{DBLP:journals/ai/CaliGP12};
\citeauthor{cali2012general} \citeyear{cali2012general};
\citeauthor{gogacz2013converging} \citeyear{gogacz2013converging};
\citeauthor{johnson1984testing}) \citeyear{johnson1984testing}.
Table~\ref{table:classicalComplexity} summarises these classes and their computational complexity with respect to query answering, by distinguishing between \textit{combined complexity} (the input consists of a database, an ontology, a conjunctive query, and a tuple of constants) and 
\textit{data complexity} (only a database is given as input, whereas the remaining parameters are considered fixed).

\begin{table}[t!]
	\caption{\centering Computational complexity of \change{query answering}.}
 \label{table:classicalComplexity}
	{\tablefont\begin{tabular}{@{\extracolsep{\fill}}lcc}
		\topline
		Class \C & Data Complexity & Combined Complexity
		\midline
		$\wfrguarded^{[1],[2]}$
		& $\EXPTIMEc$ & $\TwoEXPTIMEc$ \\
		$\frguarded^{[2]}$
		& $\PTIMEc$ & $\TwoEXPTIMEc$ \\
		$\weaklyacy^{[3]}$
		& $\PTIMEc$ & $\TwoEXPTIMEc$ \\
		$\joinacy^{[4]}$
		& $\PTIMEc$ & $\TwoEXPTIMEc$ \\
		$\datalog^{[5]}$
		& $\PTIMEc$ & $\EXPTIMEc$ \\
		$\shy^{[6]}$
		& $\PTIMEc$ & $\EXPTIMEc$ \\
		$\ward^{[7]}$
		& $\PTIMEc$ & $\EXPTIMEc$ \\
		$\prot^{[8]}$
		& $\PTIMEc$ & $\EXPTIMEc$ \\
		%
		$\stickyj^{[9]}$
		& $\ACzero$ & $\EXPTIMEc$ \\
		$\linear^{[10]}$
		& $\ACzero$ & $\PSPACEc$ \\
		$\joinless^{[11]}$
		& $\ACzero$ & $\PSPACEc$ \\
		$\indep^{[12]}$
		& $\ACzero$ & $\PSPACEc$ \\
	\hline
\end{tabular}}
\small 
$[1]$ \cite{baget2011rules};
$[2]$ \cite{cali2013taming};
$[3]$ \cite{fagin2005data};
$[4]$ \cite{krotzsch2011extending}; $[5]$ \cite{ceri1989you};
$[6]$ \cite{Leone2019};
$[7]$ \cite{gottlob2015beyond}; $[8]$ \cite{protected}; 
$[9]$ \cite{DBLP:journals/ai/CaliGP12};
$[10]$ \cite{cali2012general};
$[11]$ \cite{gogacz2013converging}; 
$[12]$ \cite{johnson1984testing}.
\begin{tabular}{c}
\hline
\hline
\end{tabular}
\end{table}
\normalsize

Unfortunately, on the one side, despite the fact that existential rules generalise datalog rules, only some of these syntactic classes fully encompass \datalog and, in some cases, this even comes at the price of higher computational complexity of query answering.
Moreover, on the other side,   expressive classes typically need \textit{ad hoc} reasoners without being able to exploit mature tools developed for classes exhibiting 
lower expressiveness.

With the aim of mitigating the two aforementioned shortcomings, this paper introduces a novel general syntactic condition that allows to define, systematically and in a uniform way, from any decidable class $\mathcal{C}$ of existential rules, a new class called \dyac~that enjoys the following properties:
$(i)$ it is decidable;
$(ii)$ it generalises \datalog;%
\footnote{\change{Strictly speaking, to guarantee that \dyac generalises \datalog, one has to focus on any $\C \supseteq \dfinds$, where $\dfinds$  is the very simple class of existential rules defined in Section~\ref{sec:dfinds} such that $(a)$ rules are inclusion dependencies with no existential variable and $(b)$ predicates in rule-heads do not appear in rule-bodies.
Indeed, all known classes of existential rules based on semantic conditions as well as all concrete classes reported in Table~\ref{table:classicalComplexity} do encompass $\dfinds$.}}
$(iii)$ it generalises $\mathcal{C}$; and
$(iv)$ it can effectively exploit any reasoner for query answering over $\mathcal{C}$.
In particular, let $\mathbb{C}_d$ (resp., $\mathbb{C}_c$) be the data (resp., combined) complexity of query answering over \C, query answering over \dyac is $\PTIME^{\mathbb{C}_d}$ (resp., $\EXPTIME^{\mathbb{C}_c}$ provided that there is at least an exponential jump from $\mathbb{C}_d$ to $\mathbb{C}_c$).
Since all the classes reported in Table \ref{table:classicalComplexity} comply with the exponential jump assumption, we get the following:
$(a)$ whenever ${\mathbb{C}_d} \supseteq \PTIME$ (entries 1--8 of Table~\ref{table:classicalComplexity}), then query answering over \dyac is complete for $\mathbb{C}_d$ (resp., $\mathbb{C}_c$);
$(b)$ in all the remaining cases (entries 9-12 of Table~\ref{table:classicalComplexity}), query answering over \dyac is complete for $\PTIME$ (resp., $\EXPTIME$), namely it has the same complexity of query answering over $\datalog$.

Concerning the key principle at the heart of this new general syntactic condition,  basically,  an ontology $\Sigma$ belongs to $\dyac$
if one can easily construct a pair $(\sighg, \sigc)$ of ontologies, called \textit{dyadic},  such that: 
$(i)$ $\sighg \cup \sigc$ is equivalent to $\Sigma$ with respect to query answering; 
$(ii)$ $\sigc \in \mathcal{C}$; and 
$(iii)$ $\sighg$ is a set rules called \textit{head-ground} \change{with respect to $\sighg \cup \sigc$}~(\citeauthor{gottlob2015beyond} \citeyear{gottlob2015beyond}).
Intuitively, \change{$\sighg$ satisfies the following properties: $(1)$ it belongs to $\mathsf{Datalog}$;
$(2)$ for each database $D$, the chase procedure~(\citeauthor{DBLP:conf/pods/DeutschNR08} \citeyear{DBLP:conf/pods/DeutschNR08}) over $D \cup \sighg \cup \sigc$ never generates atoms containing null-values via rules of $\sighg$;
$(3)$ head-predicates of $\sighg$ and body-predicates of $\sighg$ are disjoint; and
$(4)$ head-predicates of $\sighg$ and head-predicates of $\sigc$ are disjoint.}
%
%
Finally, since $\dyac$ is well-defined even if $\C$ is a class of existential rules based on some semantic conditions and, if so, since query answering is still decidable over $\dyac$, then
---in analogy with the existing semantic classes---the union of all the \dyac~classes are called {\em dyadic decomposable sets}.

\change{The article is a revised version of an earlier workshop paper (\citeauthor{DBLP:conf/datalog/GottlobMM22} \citeyear{DBLP:conf/datalog/GottlobMM22}).
Specifically, the content that was previously presented in a single preliminary section has been  expanded and reorganised into two longer separate sections, namely Sections~\ref{sec:prelim} and~\ref{sec:TGDs}. 
These sections now contain the necessary background information, ensuring that the paper is self-contained.
Furthermore, the previous  notion of \vir dyadic decomposition'' has evolved into the novel notion of a \vir Dyadic Pair  of TGDs'', which is discussed in Section~\ref{sec:DyadicPairs}.
This new notion captures the essential properties of dyadic decompositions and also generalises the notion of ontology, providing new perspectives and insights.
Additionally, in Section~\ref{sec:DyadicSets}, the notion of \vir Dyadic Decomposable Sets'' is now supported by a canonical concrete algorithm that produces a Dyadic Pair of TGDs from each Dyadic Decomposable Set.
The revisions also lead to new results regarding decidability and complexity.
First, if $\mathcal{C}$ is an abstract (resp., concrete) and decidable class, then $\dyac$ is now also an abstract (resp., concrete) and decidable class.
%
%
Second, the relationship between \datalog and any $\dyac$ is made explicit, emphasising the low expressive power required for $\mathcal{C}$ to ensure that $\dyac$ fully encompasses \datalog.
Finally, the computational complexity analysis is completed in Section~\ref{sec:complex}, where both data and combined complexity for any $\dyac$ class are systematically studied.}



\section{Preliminaries}\label{sec:prelim}
In this section, we introduce the syntax and the semantics of \change{the class of rules that generalises \datalog with existential quantifiers in rule-heads.}
\change{Regarding computational complexity, we assume the reader is familiar with the basic complexity classes used in the subsequent sections: $\ACzero$ $\subseteq$ \PTIME $\subseteq$ \NP $\subseteq$ \PSPACE $\subseteq$  \EXPTIME $\subseteq$ \TwoEXPTIME. Moreover, for a complexity class $\mathbb{C}$, we denote by $\PTIME^{\mathbb{C}}$ (resp., $\EXPTIME^{\mathbb{C}}$) the class of decision problems that can be solved  by an oracle Turing machine operating in polynomial (resp., exponential) \mbox{time with the aid of an oracle that decides a problem in $\mathbb{C}$.}}

\subsection{Basics on Relational Structures} 

Fix three pairwise disjoint lexicographically enumerable infinite sets $\DC$ of {\em constants}, $\DN$ of {\em nulls} ($\varphi$, $\varphi_0$, $\varphi_1$, ...), and $\DV$ of {\em variables} ($x$, $y$, $z$, and variations thereof). Their union is denoted by $\mathsf{T}$ and its elements are called {\em terms}.
For any integer $k \geq 0$, we may write $[k]$ for the set $ \{1,..., k \}$; in particular, as usual, if $k = 0$, then $[k] = \emptyset$.

An {\em atom}  $\underline{a}$ is an expression of the form $P(\textbf{t})$, where $\preds(\underline{a})=P$ is a {\em (relational) predicate}, $\textbf{t}=t_1,..., t_k$ is a tuple of {\em terms} $\arity(\underline{a}) = \arity(P)=k \geq 0$ is the {\em arity} of both $\underline{a}$ and $P$, 
and $\underline{a}[i]$ denotes the $i$-th term $\textbf{t}[i] = t_i$ of $\underline{a}$, for each $i \in [k]$.
In particular, if $k = 0$, then $\textbf{t}$ is the empty tuple and $\underline{a} = P()$.
By $\const(\underline{a})$ and $\vars(\underline{a})$ we denote, respectively, the set of constants and variables occurring in $\underline{a}$.
A {\em fact} is an atom that contains only constants.

A {\em (relational) schema} $\mathbf{S}$ is a finite set of predicates, each with its own arity.
The set of {\em positions} of $\mathbf{S}$, denoted by $\pos(\mathbf{S})$, is defined
as \change{the set} $\{P[i] \ | \ P \in \mathbf{S}~\wedge~ 1 \leq i \leq  \arity(P)\}$, where 
each $P[i]$ denotes the $i$-th {\em position} of $P$.
A {\em (relational) structure} over $\mathbf{S}$ is any (possibly infinite) set of atoms using only predicates from $\mathbf{S}$. 
The {\em domain} of a structure $S$, denoted by $\dom(S)$, is the set of all the terms forming the atoms of $S$.
An {\em instance} over $\mathbf{S}$ is any structure $I$ over $\mathbf{S}$ such that  $\dom(I) \subseteq \DC \cup \DN$. 
A {\em database} over $\mathbf{S}$ is any finite instance over $\mathbf{S}$ containing only facts. 
The {\em active domain} of an instance $I$, denoted by $\dom(I)$, is the set of all the terms occurring in $I$, whereas the {\em Herbrand Base} of $I$, denoted by $\HB(I)$, is the set of all the atoms that can be formed using the predicate symbols of $\mathbf{S}$ and \change{terms of}
 $\dom(I)$.

Consider two sets of terms $T_1$ and $T_2$ and a map $\mu : T_1 \rightarrow T_2$.
Given a set $T$ of terms, the {\em restriction} of $\mu$ with respect to $T$ is the map $\mu|_{T} = \{t \mapsto \mu(t):t \in T_1 \cap T\}$.
An {\em extension} of $\mu$ is any map $\mu'$ between terms, denoted by $\mu' \supseteq \mu$, such that $\mu'|_{T_1} = \mu$.
A \textit{homomorphism} from a structure $S_1$ to a structure $S_2$ is any map $h:  \mathit{dom}(S_1) \rightarrow \mathit{dom}(S_2)$ such that 
both the following hold: $(i)$ if $t \in \DC \cap \mathit{dom}(S_1)$, then $h(t) = t$; and $(ii)$ $h(S_1) = \{P(h(\mathbf{t})) : P(\mathbf{t}) \in S_1\} \subseteq S_2$.

\subsection{Conjunctive Queries} 

A {\em conjunctive query} (CQ) $q$ over a schema $\mathbf{S}$ is a (first-order) formula of the form
\begin{equation}\label{eq:query}
	\langle \mathbf{x} \rangle \leftarrow \exists \ \mathbf{y} \ \Phi(\mathbf{x,y}),    
\end{equation}
\noindent where $\mathbf{x}$ and $\mathbf{y}$ are tuples (often seen as sets) of variables such that $\mathbf{x} \cap \mathbf{y} = \emptyset$, and $\Phi(\mathbf{x,y})$ is a conjunction (often seen as a set) of atoms using only predicates from $\mathbf{S}$.
In particular,
\begin{itemize}
\vspace{-1.5mm}
    \item[-] $\mathit{dom}(\Phi) \subseteq \mathbf{x} \cup \mathbf{y} \cup \DC$,
    \item[-] \change{whenever a variable $z$ belongs to $\mathbf{x} \cup \mathbf{y}$, then $z$ occurs also in $\Phi$,}
    \item[-] $\mathbf{x}$ are the {\em output} variables of $q$, and 
    \item[-] $\mathbf{y}$ are the {\em existential} variables of $q$.
\vspace{-1.5mm}
\end{itemize}
To highlight the output variables, we may write $q(\mathbf{x})$ instead of $q$.
The {\em evaluation} of $q$ over an instance $I$ is the set $q(I)$ of every tuple $\mathbf{t}$ of constants admitting a homomorphism $h_{\mathbf{t}}$ from $\Phi(\mathbf{x,y})$ to $I$ such that $h_{\mathbf{t}}(\mathbf{x}) = \mathbf{t}$.

A \textit{Boolean conjunctive query} (BCQ) is a CQ with no output variable, 
namely an expression of the form 
\mbox{$\langle \rangle \leftarrow \exists \ \mathbf{y} \ \Phi(\mathbf{y})$}. An instance $I$ satisfies a BCQ $q$, denoted $I \models q$, if $q(I)$ is nonempty, namely $q(I)$ contains only the empty tuple $\langle \rangle $.

\subsection{Tuple-Generating Dependencies}

A \textit{tuple-generating dependency} (TGD) $\sigma$---also known as {\em (existential)  rule}---over a schema $\mathbf{S}$ is a (first-order) formula of the form
\begin{equation}\label{eq:rule}
	\Phi (\mathbf{x}, \mathbf{y}) \rightarrow \ \exists \ \mathbf{z} \ \Psi(\mathbf{x}, \mathbf{z}),
\end{equation}
\noindent where $\mathbf{x}$, $\mathbf{y}$, and $\mathbf{z}$ are pairwise disjoint tuples of variables, and both $\Phi(\mathbf{x,y})$ and $\Psi(\mathbf{x,z})$ are conjunctions (often seen as a sets) of atoms using only predicates from $\mathbf{S}$.
In particular, 
\begin{itemize}
\vspace{-1.5mm}
\item[-] $\Phi$ (resp., $\Psi$) contains all and only the variables in $\mathbf{x} \cup \mathbf{y}$ (resp., $\mathbf{x} \cup \mathbf{z}$), 
\item[-] constants (but not nulls) may also occur in $\sigma$, 
\item[-] $\mathbf{x} \cup \mathbf{y}$ are the {\em universal} variables of $\sigma$ denoted by $\uvars(\sigma)$, 
\item[-] $\mathbf{z}$ are the {\em existential} variables of $\sigma$ denoted by $\exvars(\sigma)$, and 
\item[-] $\mathbf{x}$ are the {\em frontier} variables of $\sigma$ denoted by $\front(\sigma)$.
\vspace{-1.5mm}
\end{itemize}
\noindent We refer to $\body(\sigma) = \Phi$ and $\head(\sigma) = \Psi$ as the {\em body} and {\em head} of $\sigma$, respectively.
If $|\head(\sigma)| = 1$, the TGD is called \textit{single-head}, otherwise it called \textit{multi-head}.
If $\exvars(\sigma) = \emptyset$ and $|\head(\sigma)|=1$, then $\sigma$ is called $\mathit{datalog}$ rule.
With $ \hdpred(\sigma) $ (resp., $\bdpred(\sigma)$) we denote the set of predicates in $\head(\sigma)$ (resp., $\body(\sigma)$).
\change{An instance $I$ satisfies $\sigma$, written $I \models \sigma$, if the existence of 
a homomorphism $h$ from $\Phi$ to $I$ implies the existence of a homomorphism $h' \supseteq h_{| \mathbf{x}}$ from $\Psi$ to $I$.}

An \textit{ontology} $\Sigma$ is a set of rules. \change{An instance  $I$ satisfies $\Sigma$, written $I \models \Sigma$, if $I \models \sigma$ for each $ \sigma \in \Sigma $.}
Without loss of generality, we assume that $\vars(\sigma_1) \cap \vars(\sigma_2)~=~\emptyset$, for each pair $\sigma_1, \sigma_2$ of rules in $\Sigma$.
Operators $\exvars$, $\hdpred$, and $\bdpred$ naturally extend on ontologies.

A class $\mathcal{C}$ of  ontologies is any (typically infinite) set of TGDs fulfilling some syntactic or semantic conditions (see, for example, the classes shown in Table~\ref{table:classicalComplexity}, some of which will be formally defined in the subsequent sections).
In particular, $\mathsf{Datalog}$ is the class of ontologies containing only datalog rules.

\change{Finally,} the \textit{schema} of an ontology $\Sigma$, denoted $\sch(\Sigma)$, is the subset of $ \mathbf{S}$ containing all and only the predicates occurring in $\Sigma$, whereas $ \arity(\Sigma) = \max_{P \in \sch(\Sigma)} \arity(P)$.
For simplicity of exposition, we write $\pos(\Sigma)$ instead of $\pos(\sch(\Sigma))$.

\rem{An instance $I$ satisfies a rule $\sigma$ as in Equation~\ref{eq:rule}, written $I \models \sigma$, if the existence of a homomorphism $h$ from $\Phi$ to $I$ implies the existence of a homomorphism $h' \supseteq h_{| \mathbf{x}}$ from $\Psi$ to $I$.}
\rem{An instance  $I$ satisfies a
set $ \Sigma $ of TGDs, written $ I \models \Sigma $, if $ I \models \sigma $ for each $ \sigma \in \Sigma $.}

\subsection{Ontological Query Answering}
Consider a database $D$ and a set $\Sigma$ of TGDs.
A \textit{model} of $D$ and $\Sigma$ is an instance
$I$ such that $I \supseteq D$ and $I \models \Sigma$. Let $\mods(D, \Sigma)$ be the set of all models of $D$ and $\Sigma$. 
The \textit{certain answers}
to a CQ $q$ w.r.t. $ D $ and $\Sigma$ are defined as the set of tuples
$ \cert(q, D, \Sigma) = \bigcap_{M \in \mods(D,\Sigma)} q(M).$
Accordingly, for any fixed schema $\mathbf{S}$, two ontologies $\Sigma_1$ and $\Sigma_2$ over $\mathbf{S}$ are said to be $\mathbf{S}$-\textit{equivalent}  (in symbols $\Sigma_1 \equiv_\mathbf{S} \Sigma_2$) if, for each $D$ and $q$ over $\mathbf{S}$, it holds that $\cert(q,D,\Sigma_1) = \cert(q,D,\Sigma_2).$
The pair $D$ and $\Sigma$ 
satisfies a BCQ $q$, written $D \cup \Sigma \models q$, if $\cert(q, D, \Sigma) = \langle \rangle$, namely $M \models q$ for each $M \in \mods(D,\Sigma)$.
Fix a class $\mathcal{C}$ of ontologies.
The computational problem studied in this work---called \certEvalC---can be schematized as follows:

\vspace{3mm}

\begin{center}
	\fbox{\hspace{1.2cm}
		\begin{minipage}{10cm}
			\begin{itemize}
				\item[\textsc{Problem:}] \certEvalC.
				\item[\textsc{Input:}] A database $D$, a  ontology $\Sigma \in \C$,
				a conjunctive query $q(\mathbf{x})$, and a tuple $\mathbf{c} \in \DC^{|\mathbf{x}|}$.
				\item[\textsc{Question:}] Does $\mathbf{c} \in \cert(q, D, \Sigma)$ hold?
			\end{itemize}
	\end{minipage}}
\end{center}

\vspace{3mm}

In what follows, 
with a slight abuse of terminology, whenever we say that \C is decidable, we mean that \certEvalC is decidable.
Note that $\mathbf{c} \in \cert(q, D, \Sigma)$ if, and only if, 
$D \cup \Sigma \models q(\mathbf{c})$,
where $q(\mathbf{c})$ is the BCQ obtained from $q(\mathbf{x})$ by replacing, for each $i \in \{1,...,|\mathbf{x}|\}$, every occurrence of the variable $\mathbf{x}[i]$ with the constant $\mathbf{c}[i]$.
Actually, the former problem is \ACzero reducible to the latter.

While considering 
the computational complexity of \certEvalC, we recall the following convention: 
$(i)$ {\em combined complexity} means that $D$, $\Sigma$, $q$, and $\mathbf{c}$ are given in input; and 
$(ii)$ {\em data complexity} means that only $D$ and $\mathbf{c}$ are given in input, whereas $\Sigma$ and $q$ are considered fixed.
\change{Accordingly, we point out that  complexity results reported in Table~\ref{table:classicalComplexity} refer to \certEvalC under this convention.}

\subsection{The Chase Procedure} 

The chase procedure (\citeauthor{DBLP:conf/pods/DeutschNR08} \citeyear{DBLP:conf/pods/DeutschNR08}) is a tool exploited for reasoning with TGDs.
Consider a database $D$ and a set $\Sigma$ of TGDs.
Given an instance $I \supseteq D$,
a {\em trigger} for $I$ is any pair $\langle \sigma, h\rangle$, where $\sigma \in \Sigma$ is a rule as in Equation~\ref{eq:rule} and $h$ is a  homomorphism from $\body(\sigma)$ to $I$.
Let $I' = I \cup h'(\head(\sigma))$, where $h' \supseteq h|_{\mathbf{x}}$ maps each $z \in \exvars(\sigma)$ to a \vir fresh'' null $h'(z)$ not occurring in $I$ such that $z_1 \neq z_2$ in $\exvars(\sigma)$ implies $h'(z_1) \neq h'(z_2)$.
Such an operation which constructs $I'$ from $I$ is called 
\textit{chase step} and denoted $\langle \sigma, h \rangle (I) = I'$.

Without loss of generality, we assume that nulls introduced at each trigger functionally depend on the pair  $ \langle \sigma, h \rangle $ that is involved in the trigger.
For example, given a rule $\sigma$ as in Equation \ref{eq:rule} and a homomorphism $h$, it is sufficient to pick $ \varphi_{\langle \mathbf{z},h(\mathbf{x},\mathbf{y}) \rangle} $ as the fresh null replacing $\mathbf{z}$ when the chase produces the trigger $\langle \sigma, h \rangle$.
Accordingly, the processing order of rules and triggers does not change the result of the chase, and hence $\chase(D, \Sigma$) can be considered unique.
The \textit{chase procedure} of $D \cup \Sigma$ is an
exhaustive application of chase steps, starting from $D$, which produce a sequence $I_0 = D \subset I_1 \subset I_2 \subset \dots \subset I_m \subset \dots$ of instances in such a way that:
$(i)$ for each $i\geq 0$, $I_{i+1} = \langle \sigma, h \rangle (I_{i})$ is a chase step obtained via some trigger $\langle \sigma, h \rangle$ for $I_{i}$;
$(ii)$ for each $i \geq 0$, if there exists a trigger $\langle \sigma, h \rangle$ for $I_i$, then there exists some $j>i$ such that $I_{j} = \langle \sigma, h\rangle (I_{j-1})$ is a chase step; and
$(iii)$ any {\em trigger} $\langle \sigma, h\rangle$ is used only once.
We define $\chase(D,\Sigma) = \cup_{i\geq0} I_i$.

The {\em chase bottom} is the finite set of all null-free atoms in $\chase(D,\Sigma)$ and is defined as $\chase^{\bot} (D, \Sigma) = \chase(D, \Sigma) \cap \HB(D)$.

It is well know that $\chase(D,\Sigma)$ is a {\em universal model} of $D \cup \Sigma$, that is, for each $M \in \mods(D, \Sigma)$ there is a homomorphism from $\chase(D,\Sigma)$ to $M$. 
Hence, given a BCQ $q$ it holds that $\chase(D,\Sigma) \models q \Leftrightarrow D \cup \Sigma \models q$.

We recall that $\chase(D, \Sigma)$ can be decomposed into \textit{levels} (\citeauthor{cali2010advanced} \citeyear{cali2010advanced}): 
each atom of $D$ has level
$\gamma = 0$; an atom of $\chase(D,\Sigma)$ has level $\gamma + 1$ if, during its generation, the exploited trigger $\langle \sigma, h \rangle$ maps the body of $\sigma$ via $h$ to atoms whose maximum level is $\gamma$. We refer to the
part of the chase {\em up to} level $\gamma$ as $\chase^\gamma(D, \Sigma)$.
Clearly, $\chase(D,\Sigma) = \cup_{\gamma \geq 0} \chase^\gamma(D, \Sigma)$.
Finally, a trigger {\em involved} at a certain level $j$ if it gives rise to an atom of level $j$.

\section{Considered Decidable Classes of TGDs}\label{sec:TGDs}
In this section we provide an overview of the main existing decidable classes of TGDs.
We recall both syntactic and semantic classes, where the first are based on a specific syntactic condition that can be checked, while the latter are classes that do not come with a syntactic property that can be checked on rules and, hence, are not recognizable.
\change{Finally, we introduce a very simple new class of existential rules called \dfinds. We will exploit the latter to sharpen our results presented in Section \ref{sec:DyadicSets} and Section \ref{sec:complex}.}

\subsection{Preliminary Notions}

We start fixing some basics notions.
We have chosen to provide a uniform notation for the key existing notions of affected and invaded positions, such as attacked, protected, harmless, harmful, and dangerous variables~(\citeauthor{Leone2019} \citeyear{Leone2019};
\citeauthor{cali2013taming} \citeyear{cali2013taming};
\citeauthor{krotzsch2011extending} \citeyear{krotzsch2011extending};
\citeauthor{nuovoAndreas} \citeyear{nuovoAndreas};
\citeauthor{DBLP:conf/datalog/GottlobMM22} \citeyear{DBLP:conf/datalog/GottlobMM22}).
Basically, these notions serve to separate positions in which the chase can introduce only constants from those where nulls might appear.

\begin{definition}[S-affected positions]\label{S-affected}
Consider an ontology $\Sigma$ and a variable  $ z \in \exvars(\Sigma) $.
A position $\pi \in \mathsf{pos}(\Sigma)$ is {\em $z$-affected} (or {\em invaded} by $z$) if one of the following two properties holds:
$(i)$ there exists $ \sigma \in \Sigma $ such that $ z $ appears in the head of $ \sigma $ at position $\pi$; 
$(ii)$ there exist $ \sigma \in \Sigma $ and $ x \in \mathsf{front}(\sigma) $ such that $x$ occurs both in $ \head(\sigma) $ at position $\pi$ and in $ \body(\sigma) $ at $z$-affected positions only.
Moreover, a position $ \pi \in \mathsf{pos}(\Sigma)$ is $ S $-{\em affected}, where $ S \subseteq \exvars(\Sigma) $, if:
$(i)$ for each $ z \in S $, $ \pi $ is $z$-affected; and
$(ii)$ for each $ z \in \exvars(\Sigma) $, if $ \pi $ is $ z $-affected, then $ z \in S $. \mybox
\end{definition}

\noindent Note that for every position $ \pi $ there exists a unique set $S$ such that $\pi$ is $S$-affected. We write $\aff(\pi)$ for this set $S$. Moreover, 
\mbox{$ \aff(\Sigma) = \{\pi \in \pos(\Sigma) \ | \ \aff(\pi) \neq \emptyset  \}, $}
and
$ \nonaff(\Sigma) = \pos(\Sigma) \setminus \aff(\Sigma).$

\medskip
\noindent We point out that the notion above presented is a refined version of the classical notion of affected position (\citeauthor{cali2013taming} \citeyear{cali2013taming}).
In particular, it holds that if a position $\pi$ is $S$-affected, then $\pi$ is also affected; whereas if $\pi$ is affected, then $\pi$ may not be $S$-affected.
Moreover, the $S$-affected notion coincides with the one of {\em attacked positions by a variable} (\citeauthor{Leone2019} \citeyear{Leone2019}; \citeauthor{krotzsch2011extending} \citeyear{krotzsch2011extending}).
We highlight that its key nature and properties are not modified by the notion of $S$-affected position introduced above. Hence, for simplicity of exposition, \change{we give only}\rem{we find more useful to give only} this refined definition.
In the same spirit, we classify variables occurring in a conjunction of atoms.

\begin{definition}[Variables classification]\label{def:variable_classification}
Let $\sigma$ be a TGD and $ x \in \vars(\body(\sigma)) $. Then,
$(i)$ if $ x $ occurs at positions $ \pi_1, \dots, \pi_n $ and $ \bigcap_{i=1}^{n} \mathsf{aff}(\pi_i) = \emptyset $, $ x $ is said \change{to be} $\harml$;
$(ii)$ if $ x $ is not harmless, \change{with}\rem{placed} $S =  \bigcap_{i=1}^{n} \mathsf{aff}(\pi_i) $, it is said \change{to be} \textit{$ S $-harmful};
$(iii)$ if $ x $ is $ S $-harmful and belongs to $ \front(\sigma) $, $ x $ is $ S $-\textit{dangerous}.\mybox
\end{definition}

\noindent Given a variable $ x $ that is $S$-dangerous, we write $ \dang(x) $ for the set $ S $.
Hereinafter, for simplicity of exposition, the prefix $S$- is omitted when it is not necessary.
Consider an ontology $\Sigma$. Given a rule $\sigma \in \Sigma$, we denote by $\dang(\sigma)$ (resp., $\harml(\sigma)$ and $\harmf(\sigma)$) the dangerous (resp., harmless and harmful) variables in $\sigma$. %
These sets of variables naturally extend to the whole $\Sigma$  by taking, for each of them, the union over all the rules of $\Sigma$. 

\subsection{Decidable Classes of Existential Rules}

We now survey \rem{about} the fifteen concrete classes reported in Table~\ref{table:classicalComplexity} as well as the known abstract classes based on semantic conditions.
On the one side, we report \change{some} specific syntactic conditions whenever these are useful for the rest of the presentation; on the other side, \change{for all of them (both concrete and abstract)}, we recall their  containment relationships.
\mbox{For the rest of the section, fix a $\datex$ ontology $\Sigma$.}

\medskip 
The class $\fes$ (\citeauthor{baget2009extending} \citeyear{baget2009extending}) stands for \textsf{finite expansions sets}, which intuitively are sets of TGDs which ensure the termination of the chase procedure.
The class $\bts$ (\citeauthor{baget2009extending} \citeyear{baget2009extending}) stands for \textsf{bounded treewidth sets}, which intuitively are sets of TGDs which guarantee that the (possibly infinite) instance constructed by the chase procedure has bounded treewidth.
The class $\fus$ (\citeauthor{baget2011rules} \citeyear{baget2011rules}) stands for \textsf{finite unification sets}, which intuitively are sets of TGDs which guarantee the termination of (resolution-based) backward chaining procedures.
The class $\ps$ (\citeauthor{Leone2019} \citeyear{Leone2019}) stands for \textsf{strongly parsimonious sets}, which intuitively are sets of TGDs  which guarantee that the parsimonious chase procedure can be reapplied  a number of times that is linear
in the size of the query.

\medskip


\medskip
The class $\weaklyacy$ (\citeauthor{fagin2005data} \citeyear{fagin2005data}) is based on the {\em acyclicity} condition. To define the latter, we recall the \textit{label graph} of $\Sigma$, $G(\Sigma) = \langle N,A\rangle$, defined as follows:
$(i)$ $N = \cup_{P \in \sch(\Sigma)} \pos(P)$;
\change{$(ii)$ $(\pi_1, \pi_2, \forall) \in A$ if there are $\sigma \in \Sigma$ and $x \in \front(\sigma)$ such that $x$ occurs both in $\body(\sigma)$ at position $\pi_1$ and in $\head(\sigma)$ at position $\pi_2$; and
$(iii)$ $(\pi_1, \pi_2, \exists) \in A$ if there are $\sigma \in \Sigma$, $x \in \front(\sigma)$, and $y \in \exvars(\sigma)$ such that both $x$ occurs in $\body(\sigma)$ at position $\pi_1$ and $y$ occurs in $\head(\sigma)$ at position $\pi_2$.}
The \textit{existential graph} of $\Sigma$ is
$G_\exists(\Sigma) = \langle N,A\rangle$, where:
$(i)$ $N = \cup_{\sigma \in \Sigma}\exvars(\sigma)$; and
$(ii)$ $(X,Y) \in A$ if the rule $\sigma$ where $y$ occurs contains a universal variable $x$-affected and occurring in $\head(\sigma)$.
Therefore, $\Sigma$ belongs to $\weaklyacy$ (resp., $\joinacy$) if 
$G(\Sigma)$ (resp., $G_\exists(\Sigma)$) has no cycle going through an $\exists$-arc (resp., is acyclic).


\medskip
We now recall the notion of marked variable, in order to define the class $\sticky$ (\citeauthor{DBLP:journals/ai/CaliGP12} \citeyear{DBLP:journals/ai/CaliGP12}).
A variable $x$ of $\Sigma$ is {\em marked} if
$(i)$ there is $\sigma \in \Sigma$ such that $x$ occurs in $\body(\sigma)$ but not in $\head(\sigma$); or
$(ii)$ there is $\sigma \in \Sigma$
such that $x$ occurs in $\head(\sigma)$ at position \change{$\pi$} together with
some $\sigma' \in \Sigma$ having a marked variable in its body at position \change{$\pi$}.
Accordingly, the {\em stickiness} condition states that $\Sigma$ is $\sticky$ if, for each $\sigma \in \Sigma$, 
$x$ occurs multiple times in $\body(\sigma)$ implies $x$ is not marked.

\medskip
The class $\linear$ (\citeauthor{cali2012general} \citeyear{cali2012general}) is based on the {\em linearity} condition: an ontology $\Sigma$ belongs to
$\linear$ if each rule contains at most one body atoms.
This class generalize the class $\indep$ (\citeauthor{abiteboul1995foundations} \citeyear{abiteboul1995foundations}; \citeauthor{johnson1984testing} \citeyear{johnson1984testing}) in which rules contain only one body atom and one head atom and the repetition of variables is not allowed neither in the body nor in the head.

\medskip
The class $\guarded$ (\citeauthor{cali2013taming} \citeyear{cali2013taming}) is based on the {\em guardedness} condition: an ontology $\Sigma$ belongs to $\guarded$ if for each rule $\sigma \in \Sigma$ there is $\underline{a}$ in $\body(\sigma)$ such that $\uvars(\sigma) = \vars(\underline{a})$.
In similar fashion, $\Sigma$ belongs to $\wguarded$ if, for each $\sigma \in \Sigma$, there is an atom of $\body(\sigma)$ containing all the affected variables~of~$\sigma$.

\medskip
We recall the {\em shyness} condition underlying the class $\shy$ (\citeauthor{Leone2019} \citeyear{Leone2019}).
An ontology $\Sigma$ is $\shy$ if, for each $\sigma \in \Sigma$ the following conditions both hold: $(i)$ if a variable $x$ occurs in more than one body atom, then $x$ is harmless; $(ii)$ for every pair of distinct dangerous variable $z$ and $w$ in different atoms, $ \dang(z) \cap \dang(w) = \emptyset$.

\medskip
The class $\ward$ (\citeauthor{gottlob2015beyond} \citeyear{gottlob2015beyond}) is based on the {\em wardedness} condition: $\Sigma \in \ward$ if, for each $\sigma \in \Sigma$, there are no dangerous variables in $\body(\sigma)$, or there exists an atom $\underline{a} \in \body(\sigma)$, called a \textit{ward}, such that
$(i)$ all the dangerous variables in $\body(\sigma)$ occur in $\underline{a}$, and
$(ii)$ each variable of $\vars(\underline{a}) \cap \vars(\body(\sigma) \setminus \{\underline{a}\})$ is harmless.

\medskip
\change{Having finished with syntactic and semantic conditions, we close the section with a proposition stating their containment relationships (\citeauthor{baget2011rules} \citeyear{baget2011rules}; 
\citeauthor{krotzsch2011extending} \citeyear{krotzsch2011extending}; \citeauthor{Leone2019} \citeyear{Leone2019}; \citeauthor{protected} \citeyear{protected}).}

\begin{proposition}\label{prop:basicCont}
The following classes are pairwise uncomparable, except for:
\begin{itemize}
\item[] \begin{itemize}
    \item[-] $\indep \subset \joinless$, $\indep \subset \linear$;
    \item[-] $\joinless \subset \sticky \subset \sticky$-$\mathsf{Join} \subset \fus$;
    \item[-] $\linear \subset \guarded$, $\linear \subset \mathsf{Protected}$;
    \item[-] $\guarded \subset \mathsf{Weakly}$-$\guarded$, $\guarded \subset \mathsf{Fr}$-\guarded;
    \item[-] $\mathsf{Weakly}$-$\guarded \subset \mathsf{Weakly}$-$\mathsf{Fr}$-\guarded $\subset \bts$;
    \item[-] $\mathsf{Fr}$-\guarded $\subset \mathsf{Weakly}$-$\mathsf{Fr}$-\guarded;
    \item[-] $\datalog \subset \mathsf{Weakly}$-$\guarded$,
          $\datalog \subset \mathsf{Protected}$,
          $\datalog \subset \weaklyacy$;
    \item[-] $\mathsf{Protected} \subset \ward$, $\mathsf{Protected} \subset \shy \subset \ps$;
    \item[-] $\weaklyacy \subset \joinacy \subset \fes$.
\end{itemize}
\end{itemize}
 \end{proposition}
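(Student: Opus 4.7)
The plan is to establish Proposition~\ref{prop:basicCont} by splitting the claims into three groups: \textbf{(A)} the stated strict inclusions $\mathcal{C}_1 \subset \mathcal{C}_2$, \textbf{(B)} the strictness of each such inclusion, and \textbf{(C)} the pairwise incomparability of all remaining pairs. Since most of the individual relationships are folklore or recorded in the cited literature, I would assemble the proof as a careful cross-check against the definitions in Section~\ref{sec:TGDs}, supplemented by compact witness ontologies for strictness and incomparability.

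For group \textbf{(A)}, the approach is purely definitional. For instance, $\indep \subset \linear$ is immediate since every inclusion dependency is a single-body, single-head rule, hence linear; $\linear \subset \guarded$ follows because the unique body atom trivially guards all universal variables; $\guarded \subset \wguarded$ holds because every universal variable is, in particular, affected, so the guard is also a weak guard; $\datalog \subset \mathsf{Protected}$, $\datalog \subset \wguarded$, and $\datalog \subset \weaklyacy$ all follow because a $\datalog$ ontology has $\exvars(\Sigma)=\emptyset$, so $\aff(\Sigma)=\emptyset$, which trivialises the syntactic requirements of the three target classes. The two chains $\joinless \subset \sticky \subset \stickyj \subset \fus$ and $\mathsf{Protected} \subset \shy \subset \ps$ are tracked back to the original papers (\citeauthor{DBLP:journals/ai/CaliGP12} \citeyear{DBLP:journals/ai/CaliGP12}; \citeauthor{Leone2019} \citeyear{Leone2019}; \citeauthor{protected} \citeyear{protected}), while the weakly-acyclic/jointly-acyclic inclusion rests on the fact that a cycle in the existential graph induces a cycle through an $\exists$-arc in the label graph; $\joinacy \subset \fes$ and $\mathsf{Weakly\text{-}(Fr)\text{-}Guarded} \subset \bts$ are standard termination/treewidth arguments from \citeauthor{baget2011rules} (\citeyear{baget2011rules}) and \citeauthor{krotzsch2011extending} (\citeyear{krotzsch2011extending}).

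For group \textbf{(B)}, each strict inclusion $\mathcal{C}_1 \subset \mathcal{C}_2$ needs a one-rule (or two-rule) witness in $\mathcal{C}_2 \setminus \mathcal{C}_1$. Typical witnesses are: a guarded-but-non-linear rule $R(x,y), S(y,z) \to T(x,z)$ for $\linear \subset \guarded$; a datalog rule with a repeated body variable for $\datalog \subset \weaklyacy$ (datalog but not trivially so in more restrictive classes); a weakly-guarded rule with an affected non-guarded variable to separate $\guarded$ from $\wguarded$; and standard sticky/linear/protected examples from the cited papers to witness strictness in the remaining chains.

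For group \textbf{(C)}, the incomparability claims are the most laborious: for every unordered pair of classes not linked by a chain in \textbf{(A)}, one needs witness ontologies on both sides. Most such pairs already have witnesses in the literature (e.g. a cyclic guarded rule showing $\guarded \not\subseteq \weaklyacy$, a linear non-sticky rule with repeated marked variables, or a sticky rule with joins showing $\sticky \not\subseteq \linear$). The cleanest way to present this is a table indexed by pairs of classes, each cell containing two short rule-sets. \textbf{The main obstacle} is precisely this combinatorial bookkeeping: ensuring completeness---that no pair is missed and that the transitive closure of the claimed inclusions coincides exactly with the partial order announced. A secondary technical hurdle concerns the boundary between $\shy$, $\ward$, and $\ps$: while $\mathsf{Protected} \subset \shy$ and $\mathsf{Protected} \subset \ward$ are easy, separating $\shy$ from $\ward$ (and both from $\stickyj$) requires carefully designed rules exploiting dangerous variables shared across body atoms, and one must verify against the refined $S$-affected notion of Definitions~\ref{S-affected}--\ref{def:variable_classification}. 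Once all witnesses are in place, Proposition~\ref{prop:basicCont} follows by combining \textbf{(A)}--\textbf{(C)}.
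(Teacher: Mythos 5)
The paper does not actually prove Proposition~\ref{prop:basicCont}: it is presented as a recollection of known containment relationships, justified only by the four citations immediately preceding it (\citeauthor{baget2011rules}; \citeauthor{krotzsch2011extending}; \citeauthor{Leone2019}; \citeauthor{protected}). Your plan of reproving everything from first principles --- definitional inclusions, strictness witnesses, and a full incomparability table --- is therefore a genuinely different and far more demanding route than what the paper takes. As a strategy it is the standard one and most of your group~(A) arguments are sound, but be aware that you are committing to a large amount of work (the pairwise incomparability table over fifteen-odd classes) that you explicitly defer rather than execute, so as written the proposal is a blueprint, not a proof.

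Two concrete slips are worth fixing. First, your witness for the strictness of $\datalog \subset \weaklyacy$ cannot be ``a datalog rule with a repeated body variable'': any datalog rule lies in $\datalog$, so it cannot separate $\weaklyacy$ from $\datalog$. You need a non-datalog rule that is weakly acyclic, e.g.\ a single rule $P(x) \rightarrow \exists z\, Q(x,z)$ whose label graph has no cycle through an $\exists$-arc. Second, your justification of $\guarded \subset \wguarded$ inverts the relevant inclusion: it is not that every universal variable is affected, but that every affected body variable is universal, so the guard (which contains all universal variables) a fortiori contains all affected ones. The conclusion survives, but the stated premise is false and would mislead a reader checking the argument against Definition~\ref{S-affected}.
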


\change{Throughout the remainder of the paper, let $\mathbb{E}_{syn}$ denote the set of all fifteen decidable syntactic classes reported in Table~\ref{table:classicalComplexity}. 
Analogously, 
let $\mathbb{E}_{sem}$ denote the set of known decidable abstract classes considered in this paper, namely $\fes$, $\fus$, $\bts$, and $\ps$.}


\bigskip
\subsection{Autonomous Full Inclusion Dependencies}\label{sec:dfinds}
The aim of this section is to introduce a very simple \change{new} class of existential rules called $\dfinds$. Additionally, we characterise the main properties of this class.

\begin{definition}[\dfinds]\label{def:dfinds}
An ontology $\Sigma$ belongs to $\dfinds$ (autonomous full inclusion dependencies) if $\Sigma$ belongs to $\indep$ and the following conditions are also satisfied: (1) head predicates do not appear in bodies (autonomous property);
(2) rules have no existential variables (full property).
\end{definition}

\noindent Now, we show that any class \C of TGDs in $\mathbb{E}_{syn} \cup \mathbb{E}_{sem}$ includes the class just defined.
Formally, it holds the following.

\begin{proposition}\label{prop:dfindsSubseteqC}
Consider a class $\C \in \mathbb{E}_{syn} \cup \mathbb{E}_{sem}$ of TGDs. Then, $\dfinds \subseteq \C$.
\end{proposition}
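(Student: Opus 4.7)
The plan is to exploit the extremely restrictive shape of $\dfinds$ rules and then cascade the membership along the containments already recorded in Proposition~\ref{prop:basicCont}. Fix any $\sigma \in \Sigma$ with $\Sigma \in \dfinds$. By definition, $\body(\sigma)$ is a single atom with pairwise distinct variable arguments, $\head(\sigma)$ is a single atom with pairwise distinct variable arguments, and $\exvars(\sigma) = \emptyset$. From these facts alone, three base memberships are immediate: $\dfinds \subseteq \indep$ (by definition), $\dfinds \subseteq \datalog$ (single-head and existential-free), and $\dfinds \subseteq \linear$ (single body atom). Moreover, because $\exvars(\Sigma) = \emptyset$, no position is $z$-affected for any $z$, so $\aff(\Sigma) = \emptyset$, which in turn means $\dang(\sigma) = \harmf(\sigma) = \emptyset$ for every $\sigma \in \Sigma$.

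Using this, I would then sweep through the fifteen syntactic classes of Table~\ref{table:classicalComplexity}. The guarded-family memberships follow from $\linear \subset \guarded \subset \wguarded$ and $\guarded \subset \frguarded \subset \wfrguarded$; the sticky-family memberships from $\indep \subset \joinless \subset \sticky \subset \stickyj$; the protected/warded/shy memberships from $\linear \subset \prot$, together with $\prot \subset \ward$ and $\prot \subset \shy$. Only $\weaklyacy$ and $\joinacy$ need a direct verification: since $\exvars(\Sigma) = \emptyset$, the label graph $G(\Sigma)$ contains no $\exists$-arc, so trivially no cycle passes through one, giving $\dfinds \subseteq \weaklyacy$, and then $\weaklyacy \subset \joinacy$ does the rest.

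For the four abstract classes in $\mathbb{E}_{sem}$, I would simply ride on the syntactic memberships already established: $\dfinds \subseteq \weaklyacy \subset \joinacy \subset \fes$, $\dfinds \subseteq \stickyj \subset \fus$, $\dfinds \subseteq \wfrguarded \subset \bts$, and $\dfinds \subseteq \shy \subset \ps$. Each of these inclusions is precisely one of the links in Proposition~\ref{prop:basicCont}, so no further argument is required.

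The only mildly delicate point is keeping the bookkeeping complete across all nineteen classes, since $\prot$ is not formally defined within the excerpt and has to be accessed purely through Proposition~\ref{prop:basicCont}. I expect the main obstacle (if any) is avoiding a redundant case split: by first recording that $\exvars(\Sigma) = \emptyset$ collapses every affectedness/dangerousness-based condition to vacuity, one gets $\guarded$, $\weaklyacy$, $\sticky$, $\shy$, $\ward$ essentially for free, and the remaining classes follow mechanically from the containment diagram.
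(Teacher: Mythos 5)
Your proposal is correct and follows essentially the same route as the paper: the paper's proof reduces the claim, via the containments of Proposition~\ref{prop:basicCont}, to the two base memberships $\dfinds \subseteq \indep$ and $\dfinds \subseteq \datalog$, both immediate from Definition~\ref{def:dfinds}, exactly as you do. Your extra direct verifications (e.g.\ of $\linear$, of $\weaklyacy$ via the absence of $\exists$-arcs, and the observation that $\aff(\Sigma)=\emptyset$) are sound but redundant, since $\indep \subset \linear$ and $\datalog \subset \weaklyacy$ are already links recorded in Proposition~\ref{prop:basicCont}.
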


\vspace{3mm}
\begin{proof}
Thanks to Proposition~\ref{prop:basicCont}, the statement becomes equivalent to show that $(i)$ $\dfinds \subseteq \indep$ and $(ii)$ $\dfinds \subseteq \datalog$.
%
%
By Definition~\ref{def:dfinds}, the class $\dfinds$ contains all the rules that have only one body and head atom, without repetition of variables neither in the body nor in the head, and that satisfy the autonomous property (head atom does not appear in bodies) and the full property (rules have only one head atom without existential variables). 
Accordingly, relation $(i)$ and $(ii)$ are trivially full-filled.\end{proof}

\noindent We conclude the section by providing the complexity of the class \dfinds.

\begin{proposition}
\mbox{$\certEval[\dfinds]$} is in \ACzero in data complexity and  \NPTIMEc in combined complexity.
\end{proposition}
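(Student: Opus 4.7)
The plan is to treat the two complexity bounds independently, exploiting the very restrictive syntax of \dfinds (single body atom, single head atom, no repeated variables, no existential variables, and disjointness between head and body predicates).

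For the data complexity part (\ACzero), I would fix $\Sigma \in \dfinds$ and a CQ $q$. The crucial observation is that, by the autonomous and full conditions, $\Sigma$ is non-recursive and existential-free: no rule produces atoms over a predicate that can trigger another rule, and every atom created by a trigger is a renaming/projection of an atom already present in $D$. Consequently, $\chase(D,\Sigma)$ is finite and ``stratified in a single layer'': for every head predicate $Q$ of $\Sigma$, the set of $Q$-atoms in $\chase(D,\Sigma)$ is exactly $Q^D \,\cup\, \bigcup_{\sigma:\,P(\mathbf{x})\to Q(\mathbf{x}')} \{Q(\mathbf{x}') : P(\mathbf{x})\in D\}$. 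This rewriting is expressible by a fixed first-order formula over $D$, and evaluating the fixed $q$ on top of this first-order-definable instance is likewise first-order. Since first-order query evaluation is in \ACzero in data complexity, so is \certEval[\dfinds].

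For the combined complexity upper bound (\NP), I would describe a straightforward guess-and-check procedure. Given $(D,\Sigma,q,\mathbf{c})$: since every rule in $\Sigma$ has a single body atom that has to be homomorphically mapped into $D$ (no nulls can ever appear), and since head predicates never occur in any body, the chase terminates after one sweep and has size $O(|\Sigma|\cdot|D|)$. It can thus be materialized in polynomial time. After that, the algorithm nondeterministically guesses a homomorphism from the grounded query $q(\mathbf{c})$ into the computed chase, which is verifiable in polynomial time. This yields \NP-membership.

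For the matching \NP-hardness I would give a trivial reduction from the (standard \NPTIMEc) problem of CQ evaluation over a database. Since the empty set of rules vacuously belongs to \dfinds, any instance $(D,q)$ of CQ evaluation is mapped to the \certEval[\dfinds] instance $(D,\emptyset,q,\langle\rangle)$, and the reduction is obviously correct because $\mods(D,\emptyset)$ has $D$ itself as a universal model. The main, and only, conceptual step to justify carefully is the first-order expressibility of $\chase(D,\Sigma)$ in the data-complexity argument; everything else is elementary.
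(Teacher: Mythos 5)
Your proposal is correct, and both of its membership arguments take a genuinely different route from the paper's. For data complexity, the paper simply invokes the containment $\dfinds \subseteq \indep$ and inherits the $\ACzero$ upper bound from $\certEval[\indep]$; you instead re-derive first-order rewritability from scratch by observing that the autonomous and full conditions make $\chase(D,\Sigma)$ a finite, one-layer, FO-definable expansion of $D$. Your argument is more self-contained and makes the reason for the bound transparent, at the cost of redoing work the paper gets for free from Table~\ref{table:classicalComplexity}. For the combined-complexity upper bound, the paper works backward: it nondeterministically rewrites the query $q$ into a query $q_\Sigma$ by resolving each query atom against a rule head, then guesses a homomorphism of $q_\Sigma$ into $D$; you work forward, materialising the polynomial-size chase (one sweep, $O(|\Sigma|\cdot|D|)$ atoms, justified by the head/body predicate disjointness) and then guessing a homomorphism of $q(\mathbf{c})$ into it. The two are equally valid here; your chase-based argument is arguably cleaner and avoids the paper's somewhat informally stated unification step, whereas the paper's rewriting argument is the one that generalises to classes where the chase does not terminate. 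Your hardness argument (reduction from CQ evaluation via the empty ontology, which vacuously lies in $\dfinds$) coincides with the paper's.
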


\begin{proof}
By Proposition \ref{prop:dfindsSubseteqC}, $\dfinds \subseteq \indep$. 
Hence, the data complexity of the problem $\certEval[\dfinds]$ is inherit from that of $\certEval[\indep]$, that is $\ACzero$.
For the combined complexity, we first observe that the problem $\certEval[\dfinds]$ is $\NPTIMEh$, \rem{since every conjunctive query can be seen as an ontology in $\dfinds$ having as rule the query itself.}\change{building upon the well-known fact that $\certEval[\emptyset]$ is  already $\NPTIMEh$. The latter refers to the problem of evaluating a query against a database in the absence of an ontology.}
Secondly, to prove the completeness of the $\certEval[\dfinds]$ problem, we show that given a query $q(\mathbf{x})$ and an ontology $\Sigma$,  it is possible to construct in $\NP$ a CQ $q_\Sigma(\mathbf{x})$ such that $c \in cert(D,\Sigma,q)$ iff $c \in q_\Sigma(D)$, with $\mathbf{c}$ being a tuple in $\DC^{|\mathbf{x}|}$.
To this aim, for each atom $\underline{a} \in q(\mathbf{x})$, we guess if leave $\underline{a}$ unchanged, or \vir resolv" $\underline{a}$ with the body of some rule $\sigma$ in $\Sigma$ such that $\head(\sigma)$ unify with $\underline{a}$.
Accordingly, $q_\Sigma(D)$ is polynomial with respect to the input and, finally, it is possible to guess in $\NP$ an homomorphism to check if $c \in q_\Sigma(D)$.
\end{proof}


\medskip

\section{Dyadic Pairs of TGDs}\label{sec:DyadicPairs}

In this section we lay the groundwork for the main contribution of the paper, that is the definition of a new decidable class of TGDs called $\dyac$.
%
To this aim we first introduce some preliminary notions in order to define a {\em dyadic pair} and, then we conclude with some computational properties.

\subsection{Formal Definition}
We start introducing the concept of {\em head-ground} set of rules, being roughly \vir non-recursive'' rules in which nulls are neither created nor propagated.

\begin{definition}[Head-ground rules] \label{def:headGround}
Consider an ontology $ \Sigma $. A set $\Sigma' \subseteq  \Sigma $ is  {\em head-ground} w.r.t. $ \Sigma $ if the following are true:
$(1)$ $ \Sigma' \in \mathsf{Datalog}$;
$(2)$ each head atom of \ $ \Sigma'$ contains only harmless variables w.r.t. $ \Sigma$;
\mbox{$(3)$ $ \hdpred(\Sigma') \cap \bdpred(\Sigma') = \emptyset $}; and
$(4)$ $ \hdpred(\Sigma') \cap \hdpred(\Sigma \setminus \Sigma') = \emptyset $.	\mybox
\end{definition}


\noindent The following example is given to better understand the above definition.
\bigskip
\begin{example}
Consider the next set of rules:
$$\begin{array}{rrcl}
    \sigma_1: & R(x_1, y_1), \change{S(y_1,u_1), T(u_1,v_1)} & \rightarrow & \exists \ z_1, w_1 \ Q(z_1, w_1)\\
    \sigma_2: & C(y_2), R(x_2, z_2)  & \rightarrow & S(y_2, z_2) \\
    \sigma_3: & D(y_3,z_3), R(x_3, w_3) & \rightarrow & T(x_3, y_3)\\
    \sigma_4: & Q(x_4, y_4) & \rightarrow & \exists \ z_4 A(x_4,z_4) \\
    \sigma_5: & A(x_5,z_5), D(y_5,z_5) & \rightarrow & Q(x_5,y_5)
\end{array}$$

\noindent A subset of head ground rule w.r.t. $\Sigma$ is given by $ \sighg = \{ \sigma_2, \sigma_3 \}$.
In fact, 
$\harml(\Sigma)$ is the set $\{x_1, y_1, y_2, x_2, z_2, x_3,y_3, z_3, y_5, z_5\}$; hence, according to Definition \ref{def:headGround}, it is easy to check that
$(i)$ $\sigma_2$ and $\sigma_3$ are datalog rules; $(ii)$ the head atoms of $\sigma_2$ and $\sigma_3$ contain only harmless variables; \rem{$(iii)$ both predicates that appear in $\head(\sigma_2)$ and $\head(\sigma_3)$ do not occur in any body of \ $\sighg$, $(iv)$ nor in the head of rules $\sigma_1, \sigma_4$ and $\sigma_5$.}
\change{$(iii)$ both predicates $S$ and $T$ do not occur in the body of any rule in $\sighg$, and $(iv)$ both predicates $S$ and $T$ do not occur in the head of any rule in $\{\sigma_1, \sigma_4, \sigma_5\}$.}
\rem{On the contrary, rules $\sigma_1, \sigma_4$ and $\sigma_5$ could not be in $\sighg$, since they violate Properties 2 and 3 of Definition \ref{def:headGround}. Hence, we observe that the set $\sighg$ is maximal.}
\change{On the contrary, none of the rules in $\{\sigma_1, \sigma_4, \sigma_5\}$ can be part of any head-ground subset of $\Sigma$. Indeed, according to  Definition \ref{def:headGround}, both $\sigma_1$ and $\sigma_2$ violate properties (1) and (2), whereas $\sigma_5$ violates property $(2)$. Hence, we observe that the set $\sighg$ is also maximal.}\mybox
\end{example}

Having in mind the notion of head-ground set of rules, we can now formally define what is a dyadic pair.

\begin{definition}[Dyadic pairs]\label{def:DyadicDec}
	Consider a class \C of TGDs. 
	A pair $\Pi = (\sighg, \sigc) $ of TGDs is {\em dyadic}  with respect to $\mathcal{C}$ if the next hold: 
	$(1)$ $ \sighg $ is head-ground with respect to $ \sighg \cup \sigc$; and 
	$(2)$ $\sigc \in \mathcal{C}$. \mybox
\end{definition}

Whenever the above definition applies, we also say, for short, that $\Pi$ is a $\C$-dyadic pair.
Consider the following example to more easily understand the concept of dyadic pair.

\begin{example}\label{ex:runningExample}
Consider the following pair $\Pi = (\sighg, \sigc) $ of TGDs, where $\sighg$ is:
\begin{center}
	$\begin{array}{rcl}
		P(x_1) & \rightarrow & H_1(x_1)\\
		P(x_2) & \rightarrow & H_2(x_2)\\
		Q(x_3) & \rightarrow & H_3(x_3).\\
	\end{array}$
\end{center}
and $\sigc$ is:
\begin{center}
	$\begin{array}{rcl}
		H_1(x_1) & \rightarrow & \exists \ y_1, z_1 \ R(y_1,z_1) \\
		H_2(x_2) & \rightarrow & \exists \ y_2 \ Q(y_2) \\
		R(y_3,x_3),H_3(x_3) & \rightarrow & S(y_3).\\
	\end{array}$
\end{center}

\noindent In particular, $\Pi$ is a dyadic pair with respect to any $\C \in \{\guarded, \shy, \ward.\}$.
To this aim, let $\Sigma = \sighg \cup \sigc$.
It easy computable that $\aff(\Sigma) = \{R[1],R[2],Q[1],S[1]\}$, where
$\aff(R[1])=\{y_1\}$,
$\aff(R[2])=\{z_1\}$,
$\aff(Q[1])=\{y_2\}$, and
$\aff(S[1])=\{y_1\}$.
Accordingly,
$\harml(\Sigma) = \{x_1,x_2,x_3\}$,
$\harmf(\Sigma) = \{y_3\}$ and $ \dang(\Sigma) = \{y_3\}$.
To prove that $\Pi$ is a dyadic pair, we have first to show that $\sighg$ is an head ground set of rules with respect to $\Sigma$.
Clearly, $\sighg \in \datalog$ and each head atom contains only harmless variables; moreover, the head predicates do not appear neither in body atoms of $\sighg$ nor in head atoms of $\sigc$. 
Hence, $\sighg$ is head-ground with respect to $\Sigma$.
It remains to show that $\sigc \in \C$. We focus on the last rule of $\sigc$, since the first two rules are linear rules, and hence are trivially guarded, shy and ward rules.
The last rule belongs to $\guarded$ since the atom $R(y_3,x_3)$ contains all the universal variables of the rule (guardedness condition); 
it belong to $\shy$ since the variable $x_3$ that occurs in two body atoms is harmless (shyness condition); 
finally, it belongs to $\ward$ since atom $R(y_3,x_3)$ is the ward that contains the dangerous variables $(y_3)$ and shares with the rest of the body only harmless variables $(x_3)$ (wardedness condition). \mybox
\end{example}

The next step is to extend the query answering problem---classically defined over an ontology---over a dyadic pair.
Therefore, we extend both notions of chase and certain answers for a dyadic pair.
Accordingly, given a dyadic pair $\Pi = (\sighg, \sigc)$, we define
\begin{equation}\label{eq:chaseDP}
\dpchase(D,\Pi) = \chase(D, \sighg \cup \sigc)
\end{equation}
and
\begin{equation}\label{eq:certDP}
\dpcert(q,D,\Pi) = \cert(q,D,\sighg \cup \sigc).
\end{equation}
%

\medskip
\noindent Now we can fix the problem studied in the rest of the paper.

\begin{center}
\fbox{\hspace{1.2cm}
\begin{minipage}{10cm}
\begin{itemize}
	\item[\textsc{Problem:}] \dpcertEvalC.
	\item[\textsc{Input:}] A database $D$, a $\C$-dyadic pair $\Pi=(\sighg, \sigc)$ of TGDs, a CQ $q(\mathbf{x})$, and a tuple $\mathbf{c} \in 
	\mathsf{C}^{|\mathbf{x}|}$.
	\item[\textsc{Question:}] Does  $\mathbf{c} \in \dpcert(q,D,\Pi)$ holds?
\end{itemize}
\end{minipage}}
\end{center}


\subsection{Computational Properties}

For the rest of the section, fix a decidable class $\C$ of TGDs.
Given a database $D$ and a $\C$-dyadic pair  $\Pi$ of TGDs, we define the following set of ground atoms:
\begin{equation}\label{eq:D'}
	\grAt(D,\Pi) = \{\underline{a} \in \chase(D, \Pi)~|~ \Pi = (\sighg,\sigc) \ \wedge \  \preds(\underline{a}) \in \hdpred(\sighg) \}.
\end{equation}

\noindent Our idea is to reduce query answering over a dyadic pair $\Pi$ to query answering over $\mathcal{C}$, the latter being decidable by assumption.

\begin{theorem}\label{thm:equivCertEval}
Consider a database $D$, a $\C$-dyadic pair $\Pi = (\sighg,\sigc)$ of TGDs, and a conjunctive query $q(\mathbf{x})$. 
Let $\dcompl = D \cup \grAt(D,\Pi)$.
It hods that $\dpcert(q,D, \Pi) = \cert(q,\dcompl,\sigc)$.
\end{theorem}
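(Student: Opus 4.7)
By the definition in Equation~\ref{eq:certDP} we have $\dpcert(q,D,\Pi) = \cert(q,D,\sighg\cup\sigc)$, so the statement reduces to proving
\[
\cert(q,D,\sighg\cup\sigc) \;=\; \cert(q,\dcompl,\sigc).
\]
Since both $\chase(D,\sighg\cup\sigc)$ and $\chase(\dcompl,\sigc)$ are universal models of the respective theories, it suffices to exhibit homomorphisms between them in both directions: this implies that they agree on all CQs, and hence on certain answers.

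\textbf{First direction (easy).} I will show that $\chase(D,\sighg\cup\sigc)$ is a model of $\dcompl$ and $\sigc$. By construction $\grAt(D,\Pi)\subseteq \chase(D,\sighg\cup\sigc)$ and $D\subseteq \chase(D,\sighg\cup\sigc)$, so $\dcompl\subseteq \chase(D,\sighg\cup\sigc)$; and $\chase(D,\sighg\cup\sigc)\models \sigc$ because $\sigc\subseteq \sighg\cup\sigc$. Hence $\chase(D,\sighg\cup\sigc)\in\mods(\dcompl,\sigc)$, and by universality of $\chase(\dcompl,\sigc)$ there is a homomorphism $g : \chase(\dcompl,\sigc)\to \chase(D,\sighg\cup\sigc)$.

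\textbf{Second direction (main obstacle).} The delicate step is to show that $\chase(\dcompl,\sigc)$ satisfies the rules of $\sighg$, so that it becomes a model of $D\cup \sighg\cup \sigc$ and universality of $\chase(D,\sighg\cup\sigc)$ then delivers a homomorphism in the opposite direction. Fix any $\sigma\in\sighg$ with body $\Phi$ and head $\Psi$, and let $h$ be a homomorphism from $\Phi$ into $\chase(\dcompl,\sigc)$. I want to show that $h$ can be extended so that $h(\Psi)\subseteq \chase(\dcompl,\sigc)$. The key leverage comes from Definition~\ref{def:headGround}(2): every frontier variable of $\sigma$ is harmless with respect to $\sighg\cup\sigc$, and a standard property of the chase is that a harmless variable can only be mapped to a constant in any body homomorphism into $\chase(D,\sighg\cup\sigc)$. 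Composing $h$ with the homomorphism $g$ from the first direction yields a body homomorphism $g\circ h$ of $\sigma$ into $\chase(D,\sighg\cup\sigc)$; by fairness of the chase, the atom $(g\circ h)(\Psi)$ must lie in $\chase(D,\sighg\cup\sigc)$. Since $\preds(\Psi)\subseteq \hdpred(\sighg)$, this atom belongs to $\grAt(D,\Pi)$. Now, because every variable of $\Psi$ is harmless, $h(\Psi)$ is already ground, so $g$ acts as the identity on it and $(g\circ h)(\Psi)=h(\Psi)$. Therefore $h(\Psi)\in \grAt(D,\Pi)\subseteq \dcompl\subseteq \chase(\dcompl,\sigc)$, proving that $\chase(\dcompl,\sigc)\models \sigma$.

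\textbf{Conclusion.} Since $D\subseteq \dcompl\subseteq \chase(\dcompl,\sigc)$ and $\chase(\dcompl,\sigc)\models \sighg\cup\sigc$, it is a model of $D\cup\sighg\cup\sigc$, and by universality of $\chase(D,\sighg\cup\sigc)$ there is a homomorphism from $\chase(D,\sighg\cup\sigc)$ into $\chase(\dcompl,\sigc)$. Together with the first direction, the two chases are query-equivalent, yielding $\cert(q,D,\sighg\cup\sigc)=\cert(q,\dcompl,\sigc)$ and hence the statement. The only genuinely non-routine point is justifying that head atoms of $\sighg$ rules produced along $\chase(\dcompl,\sigc)$ are always ground and already precomputed inside $\grAt(D,\Pi)$; properties (2), (3) and (4) of Definition~\ref{def:headGround} are exactly what makes this transfer possible.
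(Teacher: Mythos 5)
Your proposal is correct and follows the same two-inclusion skeleton as the paper's proof, but it is genuinely more explicit at the one point where the paper is terse. The paper argues both inclusions by chase monotonicity: the inclusion $\cert(q,\dcompl,\sigc)\subseteq\dpcert(q,D,\Pi)$ (your ``first direction'') is immediate from $\grAt(D,\Pi)\subseteq\chase(D,\sighg\cup\sigc)$, while for the converse the paper simply asserts that, since $\grAt(D,\Pi)$ contains all the auxiliary ground consequences of $\sighg$, chasing $\dcompl$ with $\sighg\cup\sigc$ is equivalent, for query answering, to chasing it with $\sigc$ alone. Your ``second direction'' supplies a proof of exactly that assertion: you show that $\chase(\dcompl,\sigc)$ already satisfies every rule of $\sighg$, because any body homomorphism $h$ of such a rule, composed with the homomorphism $g$ from the first direction, yields a trigger in $\chase(D,\sighg\cup\sigc)$ whose ground head atom lies in $\grAt(D,\Pi)\subseteq\dcompl$. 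This model-theoretic rendering is sound and more informative, since it makes visible which of the conditions of Definition~\ref{def:headGround} does the work.

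One imprecision needs repair. You must show that $h(\Psi)$ itself is ground, where $h$ maps into $\chase(\dcompl,\sigc)$, yet you quote the harmless-variables-map-to-constants property only for homomorphisms into $\chase(D,\sighg\cup\sigc)$. Knowing that $(g\circ h)(\Psi)$ is ground does not yield $h(\Psi)=(g\circ h)(\Psi)$: if $h$ sent a frontier variable to a null, $g$ could map that null to a constant, so $(g\circ h)(\Psi)$ would be ground while $h(\Psi)$ is not, and membership of $(g\circ h)(\Psi)$ in $\grAt(D,\Pi)$ would not witness satisfaction of the rule (which, being datalog, requires $h(\Psi)$ itself to occur in $\chase(\dcompl,\sigc)$). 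The fix is to invoke the invariant directly for $\chase(\dcompl,\sigc)$: this is legitimate because $\dcompl$ is null-free and $\sigc\subseteq\sighg\cup\sigc$, so every position that is $z$-affected with respect to $\sigc$ is $z$-affected with respect to $\sighg\cup\sigc$; hence a null created for $z$ can only occupy positions whose $\aff$-sets all contain $z$, and a variable that is harmless with respect to $\sighg\cup\sigc$ is still forced to a constant by any body homomorphism into $\chase(\dcompl,\sigc)$. With that adjustment (and noting that this invariant, standard in the literature on affected positions, is not proved in the paper either), your argument goes through.
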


\begin{proof}
\rem{Consider $\Pi = (\sighg,\sigc)$.
By definition \mbox{$\dpcert(q,D,\Pi) = \cert(q,D,\sighg \cup \sigc)$};
accordingly, the thesis boils down to showing that 
\mbox{$\cert(q,D,\sighg \cup \sigc) = \cert(q,\dcompl,\sigc)$}.
Fix any arbitrary $|{\bf x}|$-ary tuple ${\bf c}$ of constants and let $q' = q({\bf c})$.}
\change{Consider $\Pi = (\sighg,\sigc)$. By equation (\ref{eq:certDP}), \mbox{$\dpcert(q,D,\Pi) = \cert(q,D,\sighg \cup \sigc)$}.
Moreover, by fixing any arbitrary $|{\bf x}|$-ary tuple ${\bf c}$ of constants, it holds that ${\bf c} \in \cert(q,D,\sighg \cup \sigc) \Leftrightarrow D \cup \sighg \cup \sigc \models q({\bf c})$ and 
${\bf c} \in \cert(q,\dcompl,\sigc) \Leftrightarrow \dcompl \cup \sigc \models q({\bf c})$.
Let $q' = q({\bf c})$.
Accordingly, the thesis boils down to showing that 
$$D \cup \sighg \cup \sigc \models q' \Leftrightarrow \dcompl \cup \sigc \models q'.$$}

\rem{To show that $\cert(q,D, \sighg \cup \sigc) \subseteq \cert(q,\dcompl,\sigc)$, it suffices to prove that \mbox{$D \cup \sighg \cup \sigc \models q'$ implies $\dcompl \cup \sigc \models q'$}.}
$[\Rightarrow]$ Assume that $D \cup \sighg \cup \sigc \models q'$ holds. 
%
%
Hence,  $\chase(D , \sighg \cup \sigc) \models q'$.
Given that $\grAt(D,\Pi) \subseteq \chase(D, \sighg \cup \sigc)$, it holds that $\chase(D \cup \grAt(D,\Pi) , \sighg \cup \sigc) \models q'$.
Moreover, since $\grAt(D,\Pi)$ contains all the auxiliary ground consequences of $\sighg$, the latter becomes equivalent to \mbox{$\chase(D \cup \grAt(D,\Pi) , \sigc) \models q'$}.
Hence, \mbox{$D \cup \grAt(D,\Pi) \cup \sigc \models q'$},
that is $\dcompl \cup \sigc \models q'$.

\rem{To show that $\cert(q,D, \sighg \cup \sigc) \supseteq \cert(q,\dcompl,\sigc)$ it suffices to prove that if \mbox{$\dcompl \cup \sigc \models q'$}, then $D \cup \sighg \cup \sigc \models q'$.}
$[\Leftarrow]$ Assume that $\dcompl \cup \sigc \models q'$, hence $\chase(\dcompl , \sigc) \models q'$.
Since $\sigc \subseteq \sighg \cup \sigc$, it holds that $\chase(\dcompl, \sighg \cup \sigc) \models q'$. 
By hypothesis, $\grAt(D,\Pi) \subseteq \chase(D , \sighg \cup \sigc)$;
hence $\chase(D , \sighg \cup \sigc) \models q'$, that is $D \cup \sighg \cup \sigc \models q'$.
\end{proof}

\begin{algorithm}[h!]
	\DontPrintSemicolon
	\LinesNumbered
	\KwInput{A database $D$, a dyadic pair $\Pi$, a CQ $q(\mathbf{x})$, and a tuple $\mathbf{c} \in 
		\mathsf{C}^{|\mathbf{x}|}$}
	$\dcompl := \algCompleteC(D,\Pi)$; \\
	\Return{$\mathbf{c} \in \cert(q,\dcompl,\sigc)$};
\caption{$\algDpCertEvalC(q,D,\Pi,\mathbf{c})$}\label{alg:DpCertEvalC}
\end{algorithm}

According to Theorem~\ref{thm:equivCertEval}, 
to solve $\dpcertEvalC$ one can first \vir complete'' $D$ and then performing classical query evaluation.
To this aim we design Algorithm~\ref{alg:DpCertEvalC} and Algorithm~\ref{alg:DBcomplete}.
The correctness of the latter will be proved in Proposition~\ref{prop:AlgCorretness}. Consequently, the correctness of the former is guaranteed by Theorem~\ref{thm:equivCertEval}.

In particular, given a database $D$ and a dyadic pair $\Pi = (\sighg, \sigc)$, Algorithm~\ref{alg:DBcomplete} iteratively constructs the set $\dcompl = D \cup \grAt(D,\Pi)$, with $\grAt(D,\Pi)$ being the set defined by Equation \ref{eq:D'}.
Roughly speaking, the first two instructions are required, respectively, to add $D$ to $\dcompl$ and to initialise a temporary set $\tilde{D}$ used to store ground consequences derived from $\sighg$.
The rest of the algorithm is an iterative procedure that computes the certain answers (instruction 5) to the queries constructed from the rules of $\sighg$ (instruction 4) and completes the initial database D (instruction 7) until no more auxiliary ground atoms can be produced (instruction 6).
We point out that, in general, $\tilde{D} \subseteq \grAt(D,\Pi)$ holds; %
in particular, $\tilde{D} = \grAt(D,\Pi)$ holds in the last execution of
instruction 7 or, equivalently, when the condition $D \cup \tilde{D} \supset \dcompl$ examined at instruction 6 is false, since all the auxiliary ground atoms have been added to $\dcompl$.

Before we prove that Algorithm~\ref{alg:DBcomplete} always terminates and correctly constructs $\dcompl$, we show the following preliminary lemma.

\begin{algorithm}[h!]
	\DontPrintSemicolon
	\LinesNumbered
	\KwInput{A database $ D $ and a $\C$-dyadic pair $\Pi = (\sighg, \sigc)$}
	\KwOutput{The set $\dcompl$ of ground atoms} 
	$\dcompl := D$\; 
	$\tilde{D} := \emptyset$\; \label{step2}
	\For{$\mathrm{each~rule~of~the~form \ } \Phi(\mathbf{x,y}) \rightarrow \mathtt{H}_i(\mathbf{x}) \mathrm{ \ in \ } \sighg$} 
	{
		$ q:= \langle \mathbf{x} \rangle \leftarrow \Phi(\mathbf{x,y})$\;
		$\tilde{D} = \tilde{D} \cup \{ \mathtt{H}_i(\mathbf{t}) \ | \ \mathbf{t} \in \cert(q,\dcompl,\sigc) \} $\;
	}
	\If{$(D \cup \tilde{D} \supset \dcompl)$}
	{\label{Step6}
		$\dcompl := D \cup \tilde{D}$\; \label{Step7}
		\KwGoTo instruction \ref{step2}\;
	}
	\Return{$\dcompl$}
	\caption{$\algCompleteC(D,\Pi)$}\label{alg:DBcomplete}
\end{algorithm}


\begin{lemma}\label{lem:chaseContainment}
Consider a database $D$ and a set $\Sigma$ of TGDs.
Let $\Sigma' \subseteq \Sigma$ and $X \subseteq \chase^\bot(D,\Sigma) \setminus D$.
Then, $\chase(D \cup X, \Sigma') \subseteq \chase(D, \Sigma)$.
\end{lemma}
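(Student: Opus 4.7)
The plan is to proceed by induction on the chase level of atoms in $\chase(D \cup X, \Sigma')$, showing that every such atom already occurs in $\chase(D,\Sigma)$. The essential ingredient is the convention, fixed earlier in the preliminaries, that the null introduced by a chase step depends functionally on the pair $\langle \sigma, h \rangle$: two different chases that fire the same trigger produce syntactically the same head atom.

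For the base case ($\gamma = 0$), the atoms of level $0$ in $\chase(D \cup X, \Sigma')$ are exactly those of $D \cup X$. Since $D \subseteq \chase(D,\Sigma)$ trivially, and by hypothesis $X \subseteq \chase^\bot(D,\Sigma) \subseteq \chase(D,\Sigma)$, the containment holds at this level.

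For the inductive step, assume that $\chase^\gamma(D \cup X, \Sigma') \subseteq \chase(D,\Sigma)$, and take any atom $\underline{a}$ of level $\gamma+1$ in $\chase(D \cup X, \Sigma')$. Then $\underline{a}$ arises from some trigger $\langle \sigma, h \rangle$, with $\sigma \in \Sigma'$ and $h$ a homomorphism from $\body(\sigma)$ into $\chase^\gamma(D \cup X, \Sigma')$. By the induction hypothesis, $h(\body(\sigma)) \subseteq \chase(D,\Sigma)$, and since $\Sigma' \subseteq \Sigma$, the pair $\langle \sigma, h \rangle$ is also a trigger for $\chase(D,\Sigma)$. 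Because the chase procedure is required to eventually process every applicable trigger (property $(ii)$ of the chase definition), this trigger is fired during the construction of $\chase(D,\Sigma)$ as well, and by the functional-dependence convention on the introduction of fresh nulls, the atom it produces coincides with $\underline{a}$. Hence $\underline{a} \in \chase(D,\Sigma)$, and taking the union over all levels yields $\chase(D \cup X, \Sigma') \subseteq \chase(D, \Sigma)$.

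The only delicate point is the last one: ensuring that the ``same trigger'' really produces the very same atom (same null name) in both chases. This is not a computational obstacle but rather a direct appeal to the naming convention fixed in the paper, so no further auxiliary lemma is required.
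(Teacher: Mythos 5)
Your proof is correct and follows essentially the same route as the paper's: induction on the level of $\chase(D\cup X,\Sigma')$, using the fairness of the chase and the convention that fresh nulls depend functionally on the trigger $\langle\sigma,h\rangle$ so that the same trigger yields the same atom in both chases. Your base case (level $0$, handling $D\cup X$ directly via $X\subseteq\chase^{\bot}(D,\Sigma)$) is in fact slightly cleaner than the paper's, which starts at level $1$ and argues about where the generated atom sits relative to the maximal level $p$ of the atoms of $X$ inside $\chase(D,\Sigma)$.
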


\begin{proof}
Let $X = \{\underline{a}_1,...,\underline{a}_n \}$. 
For each $j \in [n]$, let $\lev(\underline{a}_i)$ be the level of $\underline{a}_i$ inside $\chase(D, \Sigma)$.
Let $p = \max_{j \in [n]}\{\lev(\underline{a}_i)\}$.
The proof proceeds by induction on the level $i$ of $\chase(D \cup X, \Sigma')$.

\medskip

\textit{Base case: $i = 1$}. We want to prove that $\chase^1(D \cup X, \Sigma') \subseteq \chase(D, \Sigma)$.
Let $\underline{a}$ be an atom of $\chase^1(D \cup X, \Sigma')$ generated exactly at level $i = 1$. 
By definition, $\underline{a}$ is obtained due to some trigger $\langle \sigma, h \rangle$ such that $h$ maps $\body(\sigma)$ to $D \cup X$. 
If $\underline{a} \in \chase^p(D, \Sigma)$, then the claim holds trivially. Otherwise, we can show that $\underline{a} \in \chase^{p+1}(D, \Sigma)$.
Indeed, since $h$ maps $\body(\sigma)$ to $D \cup X$, then 
$h$ is also a trigger involved at level $p+1$ since it  maps $\body(\sigma)$ to $\chase^{p}(D, \Sigma)$. 
In particular, $h$ maps at least one atom of $\body(\sigma)$ to some atom $\underline{a}_k \in X$ such that $\lev(\underline{a}_k) = p$.
Since, by definition, nulls introduced during the chase functionally depend on the involved triggers, then $\underline{a}$ necessarily belongs to $\chase^{p+1}(D, \Sigma)$.

\medskip

\textit{Induction step: $i = \ell$.} Given that for every level $i \leq \ell -1$, $\chase^{i}(D \cup X, \Sigma') \subseteq \chase(D,\Sigma)$ (\textit{induction hypothesis}), we prove that $ \chase^{\ell}(D \cup X, \Sigma') \subseteq \chase(D,\Sigma) $ holds, too.
Let $\beta$ be an atom of $ \chase^{\ell}(D \cup X, \Sigma') $ generated exactly at level $i=\ell$.
By definition, $\beta$ is obtained via some trigger $ \langle \sigma', h' \rangle $ such that $h'$ maps $\body(\sigma')$ to atoms with level at most $\ell-1$.
Accordingly, by induction hypothesis, $h'$ maps $\body(\sigma')$ also to $\chase(D,\Sigma)$.
Hence, since the processing order of rules and triggers does not change the result of the chase and nulls functionally depend on the involved triggers, it follow that also $\beta \in \chase(D,\Sigma)$. \end{proof}

With the next proposition, we prove that Algorithm \ref{alg:DBcomplete} always terminates and correctly constructs $\dcompl$.

\begin{proposition}\label{prop:AlgCorretness}
	Consider a database $D$ and a $\C$-dyadic pair $\Pi$ of TGDs.
	It holds that
	Algorithm \ref{alg:DBcomplete} both terminates and computes $\dcompl = D \cup \grAt(D,\Pi)$.
\end{proposition}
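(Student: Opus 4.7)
The proof splits into two parts: termination and correctness, with correctness itself decomposed into soundness ($\dcompl \subseteq D \cup \grAt(D,\Pi)$) and completeness ($\dcompl \supseteq D \cup \grAt(D,\Pi)$).

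\textbf{Termination.} First I would observe that throughout the execution every atom added to $\tilde{D}$ at instruction~5 has the form $H_i(\mathbf{t})$ where $H_i \in \hdpred(\sighg)$ and $\mathbf{t}$ is a tuple of elements returned by $\cert(\cdot,\dcompl,\sigc)$, hence a tuple of constants. Since heads of $\sighg$ contain only harmless variables (head-ground condition), these constants must come from $\dom(D) \cup \const(\sighg \cup \sigc)$, a finite set. Consequently, both $\tilde{D}$ and $\dcompl$ are contained in the finite set $\mathcal{F}$ of all ground atoms over $\hdpred(\sighg)$ with arguments in that finite domain. Because the loop re-enters instruction~\ref{step2} only when $D \cup \tilde{D}$ is a \emph{strict} superset of $\dcompl$, the sequence of values taken by $\dcompl$ is strictly increasing inside the finite set $D \cup \mathcal{F}$, so the procedure must terminate.

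\textbf{Soundness.} I would prove by induction on the number of iterations of the outer loop that, at every point, $\dcompl \subseteq D \cup \grAt(D,\Pi)$. At initialisation this is immediate. For the inductive step, suppose $\dcompl \subseteq D \cup \grAt(D,\Pi)$ and consider any atom $H_i(\mathbf{t})$ added through instruction~5 by processing a rule $\Phi(\mathbf{x},\mathbf{y}) \to H_i(\mathbf{x})$ of $\sighg$. Since $\grAt(D,\Pi) \subseteq \chase^\bot(D, \sighg \cup \sigc) \setminus D$ (these atoms carry harmless-only predicates, hence are null-free), Lemma~\ref{lem:chaseContainment} applied with $\Sigma = \sighg \cup \sigc$, $\Sigma' = \sigc$ and $X = \dcompl \setminus D$ yields $\chase(\dcompl,\sigc) \subseteq \chase(D,\sighg \cup \sigc)$. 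The hypothesis $\mathbf{t} \in \cert(q,\dcompl,\sigc)$ provides a homomorphism from $\Phi(\mathbf{x},\mathbf{y})$ to $\chase(\dcompl,\sigc)$ mapping $\mathbf{x}$ to $\mathbf{t}$; composing with the containment above, the same homomorphism lands inside $\chase(D,\sighg \cup \sigc)$, hence the trigger for the rule fires and $H_i(\mathbf{t}) \in \chase(D,\sighg \cup \sigc)$. As $H_i \in \hdpred(\sighg)$, this gives $H_i(\mathbf{t}) \in \grAt(D,\Pi)$.

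\textbf{Completeness.} The main obstacle is this direction, which I would prove by induction on the chase level $\gamma$ at which an atom $\alpha \in \grAt(D,\Pi)$ is generated in $\chase(D,\sighg \cup \sigc)$: specifically, I would show that every such $\alpha$ at level $\leq \gamma$ is already contained in $\dcompl$ after sufficiently many iterations. For $\gamma$ minimal, the rule $\sigma \in \sighg$ producing $\alpha$ via some homomorphism $h$ maps $\body(\sigma)$ into atoms generated without using any prior $\sighg$-derivation, so these body atoms lie in $\chase(D,\sigc)$; hence, with $\dcompl = D$ (first iteration), the tuple $h(\mathbf{x})$ is a certain answer to the associated query, and $\alpha$ is added to $\tilde{D}$. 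For the inductive step, the body of the $\sighg$-rule producing $\alpha$ is satisfied in $\chase^{\gamma-1}(D,\sighg \cup \sigc)$, which can be seen as the result of applying $\sigc$ starting from $D$ together with all lower-level atoms in $\grAt(D,\Pi)$. By induction, after some iteration $k$ the latter atoms all belong to $\dcompl$; then $\chase(\dcompl,\sigc)$ contains all atoms needed to witness the body, so $h(\mathbf{x}) \in \cert(q,\dcompl,\sigc)$ and $\alpha$ is added at iteration $k+1$. The delicate point, which I would handle by invoking Lemma~\ref{lem:chaseContainment} in the opposite direction (monotonicity of the chase) together with the fact that nulls functionally depend on triggers, is ensuring that the universal model $\chase(\dcompl,\sigc)$ faithfully reproduces every body witness available in $\chase(D,\sighg \cup \sigc)$ up to the required level. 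Finally, once no new atom can be produced, the exit condition at instruction~\ref{Step6} triggers, and $\dcompl = D \cup \grAt(D,\Pi)$ as required.
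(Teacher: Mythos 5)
Your proposal is correct and follows essentially the same route as the paper: termination by bounding the set of producible ground atoms over $\hdpred(\sighg)$, the inclusion $\dcompl \subseteq D \cup \grAt(D,\Pi)$ by induction on iterations via Lemma~\ref{lem:chaseContainment}, and the reverse inclusion by induction on chase levels (the paper phrases this last part as a minimal-counterexample contradiction, which is the same argument in contrapositive). The "delicate point" you flag about $\chase(\dcompl,\sigc)$ reproducing all body witnesses up to the required level is exactly the containment $\chase^{j-1}(D,\sighg\cup\sigc)\subseteq\chase(\bar D,\sigc)\subseteq\chase(\dcompl,\sigc)$ that the paper asserts at the corresponding step, so no new idea is missing.
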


\begin{proof}	
Let $\Pi = (\sighg, \sigc)$.
We proceed by proving first the termination of Algorithm \ref{alg:DBcomplete} and then its correctness. 

\medskip

\noindent {\bf Termination.} To prove the termination of Algorithm \ref{alg:DBcomplete}, it suffices to show that each instruction alone always terminates and that the overall procedure never falls into an infinite loop.
First, observe that $|\grAtDP| \leq |\hdpred(\sighg)|\cdot d^\mu$, where
$d=|\const(D)|$ and $ \mu = \max_{P \in \hdpred(\sighg)}{\arity(P)}$.
Instructions 1, 2, 4, 8 and 9 trivially terminate.
Instructions 6 and 7 both terminate, since
$\tilde{D} \subseteq \grAtDP$ always holds (see correctness below). 
Each time instruction 3 is reached, the {\bf for}-loop simply scans the set $\sighg$, which is finite by definition. 
Concerning instruction 5, it suffices to observe that its termination relies on the termination of $\certEvalC$---which is true by hypothesis---and on the fact that, for each query $q$, to construct the set $\{ \mathtt{H}_i(\mathbf{t}) \ | \ \mathbf{t} \in \cert(q,\dcompl,\sigc) \}$, the problem $\certEvalC$ must be solved at most $d^\mu$ times, \change{where $d^\mu$ is the maximum}\rem{being the maximum} number of tuples {\bf t} for which the check $\mathbf{t} \in \cert(q,\dcompl,\sigc)$ has to be performed.
Since each instruction alone terminates, it remains to analyze the overall procedure. 
It contains two loops. The first, namely the {\bf for}-loop at instruction 3, is not problematic; indeed, we shown that it locally terminates. The second one, namely the {\bf go to}-loop, depends on the evaluation of the {\bf if}-instruction, which can be executed at most $|\grAtDP|$ times. Thus, also the {\bf go to}-loop does the same.

\medskip

\noindent {\bf Correctness.}
We now claim that Algorithm \ref{alg:DBcomplete} correctly completes the database. 
Let $\dcompl$ be the output of Algorithm \ref{alg:DBcomplete}.
Our claim is that $\dcompl = D \cup \grAtDP$.

\medskip

Inclusion 1 ($\dcompl \supseteq D \cup \grAtDP$).
Assume, by contradiction, that $D \cup \grAtDP$ contains some atom that does not belong to $\dcompl$.
This means that there exists some $j>0$ such that both $\bar{D} = ((D \cup \grAtDP) \cap \chase^{j-1}(D, \sighg \cup \sigc)) \subseteq \dcompl$ and
$((D \cup \grAtDP) \cap \chase^j(D, \sighg \cup \sigc)) \setminus \dcompl \neq \emptyset$ hold.
Thus, there exists some $\underline{a} \in \chase^j(D, \sighg \cup \sigc)$ whose level is exactly $j$ and that does not belong to $\dcompl$.
Let $\langle \sigma, h\rangle$ be the trigger used by the chase to generate $\underline{a}$, where $\sigma$ is of the form $\Phi(\mathbf{x,y}) \rightarrow \mathtt{H}(\mathbf{x})$. 
Clearly, $h$ maps $\Phi(\mathbf{x},\mathbf{y})$ to $\chase^{j-1}(D, \sighg \cup \sigc)$, and we also have that $\underline{a} = \mathtt{H}(h(\mathbf{x}))$.
Consider now the query $q =\langle \mathbf{x} \rangle \leftarrow \Phi(\mathbf{x,y})$ constructed from $\sigma$ by Algorithm~\ref{alg:DBcomplete} at instruction 4.
Thus, $\chase^{j-1}(D, \sighg \cup \sigc) \models q(h({\bf x}))$ holds. 
Since $\bar{D} \subseteq \dcompl$, we have that $\chase^{j-1}(D, \sighg\cup\sigc) \subseteq  
\chase(\bar{D}, \sigc)
\subseteq
\chase(\dcompl, \sigc)$. 
Hence, $\chase(\dcompl, \sigc) \models q(h({\bf x}))$, namely $h({\bf x}) \in \cert(q,\dcompl,\sigc)$ and, thus, $\underline{a} \in \dcompl$, which is a contradiction.

\medskip 

Inclusion 2 ($\dcompl \subseteq D \cup \grAtDP$).
Let \dcompl be the set produced by Algorithm \ref{alg:DBcomplete}.
Let $\ell$ be the number of time instruction 7 of Algorithm~\ref{alg:DBcomplete} is executed.
At each execution $i \in [\ell]$ of instruction 7, the algorithm computes the set $\tilde{D}_i$ containing  only auxiliary ground atoms, and produces the set $\dcompl_i = D \cup \tilde{D}_i$.
By construction, $\dcompl = D \cup \tilde{D}_\ell$.
Let $\tilde{D}_0 = \emptyset$, $\dcompl_0 = D \cup \tilde{D}_0$, and $I_i = \tilde{D}_i \setminus \tilde{D}_{i-1}$, for each $i \in [\ell]$.
We show that $D \cup \tilde{D}_\ell \subseteq D \cup \grAtDP$, that is $\tilde{D}_\ell \subseteq \grAtDP$.
We proceed by induction on the number $\ell$ of iterations.

\textit{Base case:} Let $i = 1$. We claim that $ \tilde{D}_1 \subseteq \grAtDP $.
By construction, the set
$\tilde{D}_1  =   \{ \mathtt{H}_i(\mathbf{t}) ~|~ i \in [|\sighg|] ~\wedge~ \mathbf{t} \in \cert(q, \dcompl_0, \sigc) \}$.
Since $\dcompl_0 = D$ and the component $\sigc$ does not produce any atom in the first iteration of the algorithm, the latter is equal to 
$\{ \mathtt{H}(\mathbf{t}) ~|~ i \in [|\sighg|] ~\wedge~ \mathbf{t} \in \cert(q, D, \emptyset) \}
= \{ \underline{a} \in \chase^1(D, \sighg) ~|~ \preds(\underline{a}) \in \hdpred(\sighg) \}
\subseteq \{ \underline{a} \in \chase(D, \sighg \cup \sigc) ~|~ \preds(\underline{a}) \in \hdpred(\sighg) \} = \grAtDP$.

\textit{Induction step:} 
Given that, for $i = \ell-1, \ \tilde{D}_{\ell -1} \subseteq \grAtDP $ (\textit{induction hypothesis}), we prove that $ \tilde{D}_\ell \subseteq \grAtDP $ holds, too.
By construction
$\tilde{D}_\ell = \{ \mathtt{H}_i(\mathbf{t}) ~|~ i \in [|\sighg|] ~\wedge~ \mathbf{t} \in \cert(q, \dcompl_{\ell-1}, \sigc) \}
= \{ \mathtt{H}_i(\mathbf{t}) ~|~ i \in [|\sighg|] ~\wedge~ \mathbf{t} \in \cert(q, D \cup \tilde{D}_{\ell-1}, \sigc) \}
= \{ \underline{a} \in \chase(D \cup \tilde{D}_{\ell-1}, \sighg \cup \sigc) ~|~ \preds(\underline{a}) \in \hdpred(\sighg) \}$.
Since the set $ D \cup \tilde{D}_{\ell-1}$ already contains all the ground consequences of $\sighg$, the latter is equivalent to
$\{ \underline{a} \in \chase(D \cup \tilde{D}_{\ell-1}, \sigc) ~|~ \preds(\underline{a}) \in \hdpred(\sighg) \}$.
Applying Lemma \ref{lem:chaseContainment} together with the induction hypothesis, the last one is a subset of
$ \{ \underline{a} \in \chase(D, \sighg \cup \sigc) ~|~ \preds(\underline{a}) \in \hdpred(\sighg) \} = \grAtDP$.
\end{proof}

\noindent We conclude the section by proving the decidability of the problem \dpcertEvalC, under the assumption that \certEvalC is decidable.

\begin{theorem}
If $\certEvalC$ is decidable, then $\dpcertEvalC$ is decidable\rem{too}.
\end{theorem}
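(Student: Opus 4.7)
The plan is to show decidability by exhibiting Algorithm~\ref{alg:DpCertEvalC} as a decision procedure for $\dpcertEvalC$, relying on the machinery already developed in the section. The two ingredients I need are: (a) termination of the procedure, and (b) soundness and completeness of its output.

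First, I would unfold Algorithm~\ref{alg:DpCertEvalC} on input $(q,D,\Pi,\mathbf{c})$. Its first step invokes $\algCompleteC(D,\Pi)$ to produce $\dcompl$. By Proposition~\ref{prop:AlgCorretness}, this invocation terminates and returns exactly $\dcompl = D \cup \grAt(D,\Pi)$; note that the termination argument there already uses the decidability of $\certEvalC$, since the inner calls to $\cert(q,\dcompl,\sigc)$ at instruction~5 of Algorithm~\ref{alg:DBcomplete} are resolved by the (assumed to exist) decision procedure for $\certEvalC$, invoked at most $|\hdpred(\sighg)| \cdot d^\mu$ times in total, where $d = |\const(D)|$ and $\mu = \max_{P \in \hdpred(\sighg)} \arity(P)$.

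Second, the algorithm terminates on its second instruction by returning the Boolean value of $\mathbf{c} \in \cert(q,\dcompl,\sigc)$. Since $\sigc \in \C$ by Definition~\ref{def:DyadicDec} and $\dcompl$ is a finite database (as already observed in the proof of Proposition~\ref{prop:AlgCorretness}), this membership test reduces to a single call to the $\certEvalC$ oracle, which terminates by hypothesis.

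Finally, I would appeal to Theorem~\ref{thm:equivCertEval} to conclude correctness: the identity $\dpcert(q,D,\Pi) = \cert(q,\dcompl,\sigc)$ guarantees that $\mathbf{c} \in \dpcert(q,D,\Pi)$ if and only if $\mathbf{c} \in \cert(q,\dcompl,\sigc)$, which is precisely what Algorithm~\ref{alg:DpCertEvalC} returns. Thus Algorithm~\ref{alg:DpCertEvalC} is a decision procedure for $\dpcertEvalC$. There is no real obstacle in this proof: all the heavy lifting (the database-completion correctness, and the equivalence of the two query-answering problems) has already been done in Proposition~\ref{prop:AlgCorretness} and Theorem~\ref{thm:equivCertEval}, so the argument is essentially a three-line composition of these two results.
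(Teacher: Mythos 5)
Your proposal is correct and follows essentially the same route as the paper's own proof: exhibit Algorithm~\ref{alg:DpCertEvalC} as the decision procedure, invoke Proposition~\ref{prop:AlgCorretness} for termination and correctness of the completion step, and invoke Theorem~\ref{thm:equivCertEval} plus the decidability of $\certEvalC$ for the final membership test. The extra detail you add (the bound on the number of oracle calls and the finiteness of $\dcompl$) is consistent with, and already implicit in, the paper's argument.
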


\begin{proof}
To prove the decidability of $\dpcertEvalC$ we
design Algorithm~\ref{alg:DpCertEvalC}.
Let $D$ be a database, $\Pi=(\sighg,\sigc)$ a dyadic pair, $q(\mathbf{x})$ a CQ, and $\mathbf{c}$ a tuple in $\DC^{\mathbf{x}}$.
Clearly, step 1 always terminates since it recalls Algorithm~\ref{alg:DBcomplete} that, as shown in Proposition~\ref{prop:AlgCorretness}, always terminates and correctly constructs the set $\dcompl$.
By Theorem~\ref{thm:equivCertEval}, checking if $\mathbf{c} \in \dpcert(q,D,\Pi)$ boils down to checking if $\mathbf{c} \in \cert(q,\dcompl,\sigc)$, that is decidable by hypothesis. Hence, step 2 never falls in a loop and Algorithm~\ref{alg:DpCertEvalC} correctly computes $\dpcertEvalC$.
\end{proof}

\section{Dyadic Decomposable Sets}\label{sec:DyadicSets}

In this section we introduce a novel general condition that allows to define, from any decidable class $\mathcal{C}$ of ontologies, a new decidable class called \dyac enjoying desirable properties.
The union of all the $\dyac$ classes, with $\mathcal{C}$ being any decidable class of TGDs, forms what we call {\em dyadic decomposable sets}, which encompass and generalize any other existing decidable class, including those based on semantic conditions.

We start the section by providing a classification of atoms of the rule-body, according to where dangerous variables appear; then we define the class $\dyac$ proving that the query answer problem over this class is decidable.

\begin{definition}[Atoms classification]\label{def:atomsClassification}
	
Consider a set $\Sigma$ of TGDs and a rule $\sigma \in \Sigma$.
An atom $\underline{a}$ of $\body(\sigma)$ is $\sigma$-\textit{\problematic} if
$(i)$ $\underline{a}$ contains a dangerous variable w.r.t. $\Sigma$, or
$(ii)$ $\underline{a}$ is connected to some $\sigma$-\problematic atom via some harmful variable.
The set of all the \problematic atoms of $\sigma$ is denoted by $\patoms(\sigma)$, 
whereas \mbox{$\satoms(\sigma) = \body(\sigma) \setminus \patoms(\sigma)$} denotes the set of all the \safe atoms of $\sigma$. \mybox
\end{definition}

\noindent We highlight that $\patoms$ and $\satoms$ can share only harmless variables. The next example is to clarify the above definition.

\begin{example}\label{ex:problematicAtoms}
Consider the database $D=\{L(a),R(a,b)\}$, and the following set $\Sigma$ of TGDs:

$$\begin{array}{rrcl}
\sigma_1: & L(x_1) & \rightarrow & \exists ~ y_1 ~ P(y_1,x_1) \\
\sigma_2: & P(x_2,y_2) & \rightarrow & \exists ~ z_2 ~ Q(y_2,z_2,x_2) \\
\sigma_3: & P(z_3,x_3), Q(x_3,y_3,z_3) & \rightarrow & S(z_3)\\
\sigma_4: & P(x_4,y_4), Q(y_4,z_4,w_4), S(w_4), R(u_4,v_4) & \rightarrow & T(x_4,z_4,u_4).
\end{array}$$

\

\noindent We focus on rule $\sigma_4$. Taking into account that $\aff(\Sigma)=\{P[1],S[1],Q[2],Q[3]\}$, it follows that $\dang(\Sigma)=\{x_2,x_4,z_3,z_4\}$, $\harmf(\Sigma)=\{x_2,x_4,y_3,z_3,z_4,w_4\}$, and 
$\harml(\Sigma)=\{x_1,x_3,y_2,y_4,u_4,v_4\}$.
Hence, $\patoms(\sigma_4) = \{P(x_4,y_4), Q(y_4,z_4,w_4), S(w_4)\} $, whereas $\satoms(\sigma_4) = \{R(u_4,v_4)\}$. \mybox
\end{example}

The second step consists in selecting variables shared by \patoms and \satoms together with the harmless frontier variables appearing in \satoms.
The latter can be expressed via the set
$\HF(\sigma) = \{ x \in \vars(\satoms\change{(\sigma)}) ~|~ x \in \harml(\sigma) ~\wedge~ x \in \front(\sigma) \}$.
Then, we define $\bridge(\sigma) = \{\vars(\patoms\change{(\sigma)}) ~\cap~ \vars(\satoms\change{(\sigma)})\} ~ \cup ~ \HF(\sigma)$.
%
Trivially, it holds that $\bridge(\sigma) \subseteq \vars(\satoms\change{(\sigma)})$.
Finally, let 
%

\[
\hgrule(\sigma) : \satoms(\sigma) \, \rightarrow \, \Aux_{\sigma}(\bridge(\sigma)),
\]

\[
\mainrule(\sigma) : \Aux_{\sigma}(\bridge(\sigma)), \ \patoms(\sigma) \, \rightarrow \, \exists \, \exvars(\sigma) \ \head(\sigma),
\]

\[
\hgrule(\Sigma) = \bigcup_{\sigma \in \Sigma} \hgrule(\sigma)
\ \ \ \ \ \ \ \ \textrm{and} \ \ \ \ \ \ \ \ 
\mainrule(\Sigma) = \bigcup_{\sigma \in \Sigma} \mainrule(\sigma).
\]

\medskip

By considering again Example \ref{ex:problematicAtoms}, we have that $\bridge(\sigma_4)=\{u_4,y_4\}$ and also that
$\mainrule(\sigma_4): \Aux_{\sigma_4}(u_4,y_4),P(x_4,y_4), Q(y_4,z_4,w_4), S(w_4,y_4) \rightarrow T(x_4,z_4,u_4)$.

In the special case in which a variable $x$ of $\bridge(\sigma)$ occurs $n>1$ times in $\head(\sigma)$, then $x$ also occurs $n$ times in the head of $\hgrule(\sigma)$. Accordingly, $x$ occurs with different names (e.g., $x_1,...,x_n$) both in the head and in the body of $\mainrule(\sigma)$. For example, if the ontology contains only the rule $\sigma : P(x) \rightarrow R(x,x)$, then $\hgrule(\sigma): P(x) \rightarrow \mathtt{Aux}_\sigma(x,x) $ and  $\mainrule(\sigma): \mathtt{Aux}_\sigma(x_1,x_2) \rightarrow R(x_1,x_2)$. Clearly, the latter two rules together are equivalent to $\sigma$ w.r.t. to the schema $\{P,R\}$.
We prefer to keep the formal definition of $\hgrule(\Sigma)$ and $\mainrule(\Sigma)$ light without formalising such special cases.

We are now ready to formally introduce the class \dyac.

\begin{definition}[\dyac]\label{def:DyaC}
	Consider a class \C of TGDs such that \certEvalC is decidable. We say that $\Sigma$ belongs to $\dyac$ if $\Sigma$ belongs to $\C$ or if $\mainrule(\Sigma)$ belongs to $\C$.\mybox
\end{definition}

According to the previous definition, one can easily state the following property.

\begin{proposition}\label{prop:containment}
Consider a class \C of TGDs. It holds that $\C \subseteq \dyac$.
\end{proposition}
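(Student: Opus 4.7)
The plan is to argue directly from the definition of $\dyac$ given in Definition~\ref{def:DyaC}. That definition is formulated as a disjunction: an ontology $\Sigma$ is declared to belong to $\dyac$ whenever either $\Sigma \in \C$, or the transformed ontology $\mainrule(\Sigma) \in \C$. To establish the containment $\C \subseteq \dyac$ it therefore suffices to exhibit, for every $\Sigma \in \C$, a witness for at least one of these two disjuncts, and the first disjunct is trivially available.

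Concretely, I would pick an arbitrary $\Sigma \in \C$ and immediately observe that the hypothesis $\Sigma \in \C$ is precisely the left branch of the disjunction in Definition~\ref{def:DyaC}. Hence $\Sigma \in \dyac$ follows without any further construction, and since $\Sigma$ was chosen arbitrarily in $\C$, we conclude $\C \subseteq \dyac$. Note that the decidability assumption on $\certEvalC$ required by Definition~\ref{def:DyaC} is part of the ambient hypothesis under which $\dyac$ is defined, so it does not require separate verification here.

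There is essentially no hard step in this proof: it is a one-line consequence of unfolding the definition. The only subtlety worth mentioning is that one should not be tempted to argue via the second disjunct (involving $\mainrule(\Sigma)$), since in general $\mainrule(\Sigma)$ need not lie in $\C$ even when $\Sigma$ does; the strength of the definition is that it admits membership through \emph{either} route, and for the containment $\C \subseteq \dyac$ we exploit only the simpler one.
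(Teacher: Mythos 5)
Your proof is correct and matches the paper's reasoning: the paper states this proposition immediately after Definition~\ref{def:DyaC} with the remark that it follows easily from that definition, i.e., via exactly the first disjunct you use. No further comment is needed.
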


By Definition \ref{def:DyaC}, to check if an ontology $\Sigma$ belongs to \dyac, one has to verify if $\Sigma \in \C$, or $\mainrule(\Sigma) \in \C$.
We observe that the construction of the set $\mainrule(\Sigma)$ explained above, is polynomial (indeed linear) with respect to the size of $\Sigma$.
Hence, the following result holds.

\begin{theorem}
Consider a class \C of TGDs and assume that checking whether an ontology belongs to $\C$ is doable in some complexity class $\mathbb{C} \supseteq \PTIME$.
Then, checking whether an ontology belongs to $\dyac$ is decidable and it belongs to $\mathbb{C}$\rem{too}.
\end{theorem}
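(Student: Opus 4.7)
The plan is to exhibit a decision procedure for membership in $\dyac$ and to show that its running-time lies within $\mathbb{C}$. By Definition~\ref{def:DyaC}, $\Sigma \in \dyac$ if and only if $\Sigma \in \C$ or $\mainrule(\Sigma) \in \C$, so a natural algorithm is: \emph{(i)}~test whether $\Sigma \in \C$ and accept if so; \emph{(ii)}~otherwise compute $\mainrule(\Sigma)$ and accept if and only if $\mainrule(\Sigma) \in \C$. Since membership in $\C$ is decidable in $\mathbb{C}$ by hypothesis, both tests \emph{(i)} and \emph{(iii)} are decidable, so the overall procedure terminates.

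The non-trivial part is the complexity bound. First I would argue that $\mainrule(\Sigma)$ is computable from $\Sigma$ in polynomial time. This amounts to chaining a handful of polynomial subroutines: compute $\aff(\Sigma)$ via the least fixed point of Definition~\ref{S-affected} over the finitely many positions of $\sch(\Sigma)$; derive $\harml(\Sigma)$, $\harmf(\Sigma)$, and $\dang(\Sigma)$ from $\aff(\Sigma)$ by inspecting each variable occurrence; compute $\patoms(\sigma)$ and $\satoms(\sigma)$ for every $\sigma \in \Sigma$ by a graph traversal that closes the set of atoms containing a dangerous variable under connection through harmful variables; and finally form $\bridge(\sigma)$ and the rule $\mainrule(\sigma)$ by simple bookkeeping. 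Each of these operations is polynomial in $|\Sigma|$, and the resulting set $\mainrule(\Sigma)$ has size linear in $|\Sigma|$.

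Second, I would assemble the complexity bound. Step \emph{(i)} is in $\mathbb{C}$ by hypothesis. Step \emph{(ii)} is in $\PTIME \subseteq \mathbb{C}$ by the previous paragraph. Step \emph{(iii)} is a $\mathbb{C}$-computation applied to an input of size polynomial in $|\Sigma|$, hence still in $\mathbb{C}$ since the standard complexity classes with $\mathbb{C} \supseteq \PTIME$ are closed under polynomial-time many-one reductions. Composing the three steps yields a decision procedure in $\mathbb{C}$, which is the desired conclusion.

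The only subtle point, and the main obstacle, is ensuring that the construction of $\mainrule(\Sigma)$ really is polynomial, even in the degenerate cases highlighted right after the definition of $\mainrule$ (e.g.\ when a variable in $\bridge(\sigma)$ occurs several times in $\head(\sigma)$, so that $\mainrule(\sigma)$ has to be renamed on the body side). A careful accounting shows that the number of extra variables introduced is bounded by the total number of variable occurrences in $\Sigma$, keeping the construction linear. Given this, together with the assumption $\mathbb{C} \supseteq \PTIME$, the overall bound follows.
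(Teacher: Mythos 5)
Your proposal is correct and follows essentially the same route as the paper: decide membership via the two disjuncts of Definition~\ref{def:DyaC}, observe that $\mainrule(\Sigma)$ is constructible in polynomial (indeed linear) time, and conclude that the second test stays within $\mathbb{C}$ since $\mathbb{C} \supseteq \PTIME$. Your version is in fact more detailed than the paper's, which simply asserts the polynomiality of the construction without spelling out the fixed-point computation of $\aff(\Sigma)$ or the closure under polynomial-time reductions.
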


\begin{proof}
We start recalling that, by Definition \ref{def:DyaC}, an ontology $\Sigma$ belongs to $\dyac$ if $(i)$ $\Sigma \in \C$, or $(ii)$ $\mainrule(\Sigma) \in \C$.
Accordingly, checking condition $(i)$ is doable in some complexity class $\mathbb{C} \supseteq \PTIME$, by assumption;
otherwise, the construction of the set $\mainrule(\Sigma)$ is done by a procedure that is polynomial (indeed linear) with respect to the size of $\Sigma$ and, hence, always terminates.
Accordingly, checking condition $(ii)$ is also decidable and doable in the complexity class $\mathbb{C}$.
\end{proof}





The next step is to prove the existence of a \C-dyadic pair for any ontology $\Sigma \in \dyac$.

\begin{theorem}
Consider a set $\Sigma \in \dyac$. There exists  a $\C$-dyadic pair $\Pi = (\sighg, \sigc)$ of TGDs such that $\sighg \cup \sigc \equiv_{\mathbf{\sch(\Sigma)}} \Sigma$. In particular,
\vspace{-2mm}
\begin{itemize}
\item[]
    \begin{enumerate}
        \item If $\Sigma \in \C$, then $\sighg = \emptyset$ and $\sigc = \Sigma$;
        \item If $\Sigma \not\in \C$, then $\sighg = \hgrule(\Sigma)$ and $\sigc = \mainrule(\Sigma)$.
     \end{enumerate}
\end{itemize}
\end{theorem}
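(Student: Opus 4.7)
The plan is to split the proof into the two cases of Definition \ref{def:DyaC}. \textbf{Case 1} ($\Sigma \in \C$) is immediate: the pair $(\emptyset, \Sigma)$ is trivially $\C$-dyadic because the empty set vacuously fulfils the four head-ground conditions of Definition \ref{def:headGround}, and $\sigc = \Sigma$ belongs to $\C$ by hypothesis; the equivalence $\emptyset \cup \Sigma \equiv_{\sch(\Sigma)} \Sigma$ is obvious. \textbf{Case 2} ($\Sigma \not\in \C$, hence $\mainrule(\Sigma) \in \C$) is the substantive one: I would set $\sighg = \hgrule(\Sigma)$ and $\sigc = \mainrule(\Sigma)$ and verify (i) that $\Pi = (\sighg, \sigc)$ is a $\C$-dyadic pair and (ii) that $\sighg \cup \sigc \equiv_{\sch(\Sigma)} \Sigma$.

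For (i), the condition $\sigc \in \C$ holds by Definition \ref{def:DyaC}. To show that $\hgrule(\Sigma)$ is head-ground with respect to $\hgrule(\Sigma) \cup \mainrule(\Sigma)$, I would check the four points of Definition \ref{def:headGround}: (1) $\hgrule(\Sigma) \in \datalog$ since each rule has the form $\satoms(\sigma) \to \Aux_{\sigma}(\bridge(\sigma))$ with no existential variables; (3) $\hdpred(\hgrule(\Sigma)) \cap \bdpred(\hgrule(\Sigma)) = \emptyset$, because every $\Aux_{\sigma}$ is a fresh predicate outside $\sch(\Sigma)$ and occurs only in the head of $\hgrule(\sigma)$; (4) $\hdpred(\hgrule(\Sigma)) \cap \hdpred(\mainrule(\Sigma)) = \emptyset$, because each $\Aux_{\sigma}$ appears only in the body of $\mainrule(\sigma)$. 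The delicate point is (2): the head variables $\bridge(\sigma)$ must be harmless in the whole $\hgrule(\Sigma) \cup \mainrule(\Sigma)$. Here I would combine two observations. First, by construction every $x \in \bridge(\sigma)$ is harmless in $\sigma$ within $\Sigma$: either $x \in \vars(\patoms(\sigma)) \cap \vars(\satoms(\sigma))$ (the remark after Definition \ref{def:atomsClassification} says such shared variables can only be harmless), or $x \in \HF(\sigma)$ (harmless by definition). Second, the transformation preserves the $\aff$ sets of positions in $\sch(\Sigma)$ and no position $\Aux_{\sigma}[j]$ becomes affected, because $\hgrule$ introduces no existentials (ruling out case (i) of Definition \ref{S-affected}) and propagation via case (ii) through a $\bridge(\sigma)$-variable would force that variable to occur at uniformly affected body positions in $\satoms(\sigma)$, contradicting its harmlessness in $\sigma$.

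For (ii), since the CQ $q$ uses only predicates from $\sch(\Sigma)$ while the $\Aux_{\sigma}$ are fresh, it suffices to show that $\chase(D, \Sigma)$ and $\chase(D, \hgrule(\Sigma) \cup \mainrule(\Sigma))$ agree, up to homomorphism, on their restriction to atoms over $\sch(\Sigma)$. I would proceed by induction on the chase level using the following correspondence between triggers: each trigger $\langle \sigma, h \rangle$ of $\Sigma$ factors as a trigger $\langle \hgrule(\sigma), h|_{\vars(\satoms(\sigma))}\rangle$ producing $\Aux_{\sigma}(h(\bridge(\sigma)))$, followed by a trigger $\langle \mainrule(\sigma), h'\rangle$ in which $h'$ extends $h|_{\vars(\patoms(\sigma)) \cup \bridge(\sigma)}$ consistently with the auxiliary atom; conversely, any pair of such consecutive triggers of the split ontology glues into a single trigger of $\sigma$, the critical reason being that $\vars(\satoms(\sigma)) \cap \vars(\patoms(\sigma)) \subseteq \bridge(\sigma)$ by the very definition of $\bridge(\sigma)$, so the two sub-homomorphisms necessarily agree on their common domain and produce the same head instantiation (modulo a bijective renaming of the fresh nulls introduced for $\exvars(\sigma)$). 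I expect the main technical hurdle to be head-ground condition (2) rather than the equivalence argument: since $\aff$ is a global fixed point over the whole rule set, one has to check simultaneously that affectedness on the original schema is not altered and that no $\Aux_{\sigma}$ position becomes newly affected by the split. The chase-level induction for (ii), while somewhat tedious, follows a standard pattern once the bridge/auxiliary matching is in place.
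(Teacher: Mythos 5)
Your proof takes essentially the same route as the paper's: the same case split on whether $\Sigma \in \C$, the same choice of pair $(\hgrule(\Sigma),\mainrule(\Sigma))$ in the nontrivial case, the same verification of the four head-ground conditions via the freshness of the $\Aux_\sigma$ predicates, and the same observation that splitting each rule through $\Aux_\sigma(\bridge(\sigma))$ preserves certain answers over $\sch(\Sigma)$. You are in fact more explicit than the paper, which declares condition (2) of head-groundness ``trivially fulfilled'' and the equivalence ``well-known to be correct''; the only caveat is that your argument for (2) --- that a bridge variable occurring only at uniformly affected positions within $\satoms(\sigma)$ would contradict its harmlessness in $\sigma$ --- is not quite airtight, since harmlessness is computed over \emph{all} body occurrences of the variable and may be witnessed only by an occurrence inside $\patoms(\sigma)$, but the paper supplies no argument at all at that point, so this is not a gap relative to the paper's own proof.
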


\begin{proof}
We claim that for each ontology $\Sigma \in \dyac$ it is possible to construct a \C-dyadic pair $\Pi = (\sighg, \sigc)$ of TGDs such that $\sighg \cup \sigc \equiv_{\mathbf{\sch(\Sigma)}} \Sigma$.

According to Definition~\ref{def:DyadicDec}, we recall that a pair $\Pi = (\sighg, \sigc)$ is \C-dyadic if $(i)$ $\sighg$ is head-ground with respect to $\sighg \cup \sigc$ and $(ii)$ $\sigc \in \C$.
Assume first that $\Sigma \in \C$. Then, trivially the pair $(\emptyset,\Sigma)$ is a \C-dyadic pair.
Assume now that $\Sigma \notin \C$ and let $\Pi = (\hgrule(\Sigma), \mainrule(\Sigma))$.
Property $(ii)$ is satisfied since by hypothesis $\Sigma \in \dyac$; hence, it follows by definition that $\mainrule(\Sigma) \in \C$.
It remains to show property $(i)$.
According to Definition~\ref{def:headGround}, the set $\hgrule(\Sigma)$ has to satisfy four properties.
Property~1 and 2 are trivially fulfilled since, by construction, for each $\sigma \in \Sigma$, $\hgrule(\sigma)$ is a datalog rule and each head atom contains only harmless variables with respect to $\hgrule(\Sigma) \cup \mainrule(\Sigma)$. 
Property~3 and 4 state that \mbox{$\hdpred(\hgrule(\Sigma)) \cap \bdpred(\hgrule(\Sigma)) = \emptyset$} and $ \hdpred(\hgrule(\Sigma)) \cap \hdpred(\mainrule(\Sigma)) = \emptyset $.
These hold since, by construction, $\hdpred(\hgrule(\Sigma)) = \{\mathtt{Aux}_{\sigma} : \sigma \in \Sigma\}$, where each $\mathtt{Aux}_{\sigma}$ is a predicate that does not occur neither in any body of $\hgrule(\Sigma)$ nor in any head of $\mainrule(\Sigma)$.

Concerning the equivalence between $\sighg \cup \sigc$ and $\Sigma$, we can observe that it easily comes from the shape of  $\hgrule(\sigma)$ and $\mainrule(\sigma)$ with respect to each original rule $\sigma \in \Sigma$.
Indeed, first, the body of $\sigma$ is first partitioned in $\satoms(\sigma)$ and $\patoms(\sigma)$.
Second, all the atoms if $\satoms(\sigma)$ form the body of $\hgrule(\sigma)$. 
Then, all the variables of $\hgrule(\sigma)$ that are in join with $\patoms(\sigma)$ or are in the head of $\sigma$ are collected in $\Aux_{\sigma}(\bridge(\sigma))$.
Finally, $\Aux_{\sigma}(\bridge(\sigma))$ is put in conjunction with $\patoms(\sigma)$ to form the body of  $\mainrule(\sigma)$. 
Such a way of decomposing a rule $\sigma$ is well-known to be correct for query answering purposes even when the variables in the auxiliary atom are harmful.
\end{proof}

\begin{algorithm}[t!]
	\DontPrintSemicolon
	\LinesNumbered
	\KwInput{A database $D$, a $\Sigma \in \dyac$, a CQ $q(\mathbf{x})$, and a tuple $\mathbf{c} \in \DC^{|\mathbf{x}|}$}
	$\sighg = \mathit{hg}(\Sigma) $; \\
	$\sigc = \mathit{main}(\Sigma)$; \\
	$\Pi = (\sighg, \sigc)$; \\
	\Return{$\mathsf{DpCertEval}_{[\C]}(q,D,\Pi,\mathbf{c})$};
	
	\caption{$\algCertEvalDyaC(q,D,\Sigma,\mathbf{c})$}\label{alg:CertEvalDyaC}
\end{algorithm}

\noindent It remains to show that \dyac is decidable.
We rely on Algorithm \ref{alg:CertEvalDyaC} together with Theorem \ref{thm:equivCertEval} and Proposition \ref{prop:AlgCorretness} to state the following result.

\begin{theorem}\label{thm:certEvalDyacIsDecidable}
Consider a decidable class $\C$ of TGDs. Then, \certEvalDyaC is decidable.
\end{theorem}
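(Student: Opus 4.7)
The plan is to establish decidability constructively by showing that Algorithm~\ref{alg:CertEvalDyaC} always terminates and outputs the correct answer to $\certEvalDyaC$. The algorithm first builds a candidate $\C$-dyadic pair $\Pi = (\hgrule(\Sigma), \mainrule(\Sigma))$ from the input ontology $\Sigma \in \dyac$, and then delegates the actual query evaluation on this pair to $\algDpCertEvalC$.

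First, I would argue that the construction of $\hgrule(\Sigma)$ and $\mainrule(\Sigma)$ at lines~1--2 clearly terminates: it only requires classifying variables and atoms of each rule of $\Sigma$ according to Definitions~\ref{def:variable_classification} and~\ref{def:atomsClassification}, followed by a purely syntactic rewriting whose cost is linear in the size of $\Sigma$. Moreover, the preceding theorem establishing the existence of an equivalent $\C$-dyadic pair for every ontology in $\dyac$ guarantees that the resulting $\Pi$ is indeed a $\C$-dyadic pair and that $\hgrule(\Sigma) \cup \mainrule(\Sigma) \equiv_{\sch(\Sigma)} \Sigma$. This ensures that $\algDpCertEvalC$ is fed an input satisfying its precondition.

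Second, I would appeal to the previous theorem asserting that whenever $\certEvalC$ is decidable, so is $\dpcertEvalC$. Thus line~4 of Algorithm~\ref{alg:CertEvalDyaC} terminates and correctly decides whether $\mathbf{c} \in \dpcert(q, D, \Pi)$. Correctness of the overall output then follows from the semantic equivalence recalled above: since $q$, $D$, and $\mathbf{c}$ are formulated over $\sch(\Sigma)$, we have that $\mathbf{c} \in \cert(q, D, \Sigma)$ if and only if $\mathbf{c} \in \cert(q, D, \hgrule(\Sigma) \cup \mainrule(\Sigma))$, which by equation~(\ref{eq:certDP}) coincides with $\dpcert(q, D, \Pi)$.

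No step poses a real obstacle: all of the substantive technical work has already been done in Theorem~\ref{thm:equivCertEval}, Proposition~\ref{prop:AlgCorretness}, and the two unnumbered theorems that immediately precede the statement. The only point requiring some care is to verify that the sub-routine $\algDpCertEvalC$ indeed receives a legitimate $\C$-dyadic pair as input, but this is precisely what the equivalence theorem guarantees by virtue of the definition of $\dyac$.
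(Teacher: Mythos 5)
Your proof follows essentially the same route as the paper's: both rely on Algorithm~\ref{alg:CertEvalDyaC}, argue that constructing the pair $(\hgrule(\Sigma),\mainrule(\Sigma))$ is a terminating (linear) syntactic procedure, and then delegate termination and correctness of the query evaluation at line~4 to the previously established results on $\dpcertEvalC$ (Theorem~\ref{thm:equivCertEval} and Proposition~\ref{prop:AlgCorretness}). You are in fact slightly more explicit than the paper in verifying that the constructed pair is a legitimate $\C$-dyadic pair and that $\hgrule(\Sigma) \cup \mainrule(\Sigma) \equiv_{\sch(\Sigma)} \Sigma$, so that the answer on $\Pi$ agrees with the answer on $\Sigma$ --- a point the paper's proof glosses over.
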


\begin{proof}
To prove the statement we provide the terminating Algorithm \ref{alg:CertEvalDyaC}.
Let $\Sigma \in \dyac$ an ontology. 
Instructions 1 and 2 of the algorithm are introduced in order to construct the components $(\hgrule(\Sigma), \mainrule(\Sigma))$ of a dyadic pair $\Pi$, which is successively initialized at instruction 3.
Of course, the construction of $\Pi$ is based on a polynomial procedure with respect to the size of the input $\Sigma$, hence these instructions always terminates.
Finally, instruction 4 returns the result of the evaluation of the problem $\dpcertEvalC$.
To solve the latter, is invoked Algorithm \ref{alg:DpCertEvalC}, which in turn invokes Algorithm \ref{alg:DBcomplete};
their correctness is guaranteed by Theorem \ref{thm:equivCertEval} and Proposition \ref{prop:AlgCorretness}, respectively.
Accordingly, $\certEvalDyaC$ is decidable.
\end{proof}



\noindent Finally, we conclude the section by proving that, for any class \C including the class \dfinds defined in Section \ref{sec:dfinds}, the class \dyac includes \datalog.

\medskip\medskip

\begin{theorem}\label{thm:DatalogSubsetDyaC}
Consider a class \C of TGDs. If $\C \supseteq \dfinds$, then $\datalog \subseteq \dyac$.
\end{theorem}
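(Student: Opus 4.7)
The plan is to show that for any $\Sigma \in \datalog$ the transformed ontology $\mainrule(\Sigma)$ falls in $\dfinds$. Since the hypothesis $\C \supseteq \dfinds$ would then give $\mainrule(\Sigma) \in \C$, Definition~\ref{def:DyaC} would immediately yield $\Sigma \in \dyac$, as desired. The whole argument is driven by the simple observation that datalog rules have no existential variables, which collapses almost all the machinery of Section~\ref{sec:DyadicSets} to trivial cases.

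To carry this out, I would first observe that, because $\Sigma$ contains only datalog rules, $\exvars(\Sigma) = \emptyset$. Consequently, by Definition~\ref{S-affected} no position of $\pos(\Sigma)$ is $S$-affected for any non-empty $S$, and Definition~\ref{def:variable_classification} then forces every variable occurring in any rule body to be harmless, so none is harmful and none is dangerous. Applying Definition~\ref{def:atomsClassification}, for every $\sigma \in \Sigma$ we obtain $\patoms(\sigma) = \emptyset$ and $\satoms(\sigma) = \body(\sigma)$; hence $\bridge(\sigma) = \HF(\sigma) = \front(\sigma)$. Unwinding the construction then shows that $\mainrule(\sigma)$ reduces to a rule of the form $\Aux_\sigma(\front(\sigma)) \rightarrow \head(\sigma)$ (with the renaming convention spelled out after Example~\ref{ex:problematicAtoms} for variables that repeat in $\head(\sigma)$).

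I would then verify, rule by rule, the four conditions defining $\dfinds$ for this $\mainrule(\sigma)$: \emph{(a)} it has exactly one body atom, namely $\Aux_\sigma(\bridge(\sigma))$, and exactly one head atom, because $\sigma$ is a datalog rule and so $|\head(\sigma)| = 1$; \emph{(b)} no variable repeats in the body, since $\bridge(\sigma)$ is a set of distinct frontier variables, and no variable repeats in the head, thanks to the renaming convention for repeated head variables; \emph{(c)} the rule is full, i.e.\ has no existential variables, because $\exvars(\sigma) = \emptyset$; and \emph{(d)} the autonomous property holds, as $\hdpred(\mainrule(\Sigma)) = \hdpred(\Sigma)$ while $\bdpred(\mainrule(\Sigma)) = \{\Aux_\sigma : \sigma \in \Sigma\}$ consists of fresh predicates disjoint from $\hdpred(\Sigma)$.

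I do not expect any genuine obstacle here: everything is forced by $\exvars(\Sigma) = \emptyset$. The only mildly delicate step is to be explicit about the renaming convention following Example~\ref{ex:problematicAtoms}, which guarantees that neither the body nor the head of $\mainrule(\sigma)$ contains repeated variables even when $\head(\sigma)$ does. Combining these observations gives $\mainrule(\Sigma) \in \dfinds \subseteq \C$, and hence $\Sigma \in \dyac$ by Definition~\ref{def:DyaC}, concluding the argument.
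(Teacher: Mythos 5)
Your proposal is correct and follows essentially the same route as the paper's proof: both show that the canonical decomposition of a datalog rule $\sigma$ yields $\mainrule(\sigma) : \Aux_\sigma(\mathbf{x}) \rightarrow P(\mathbf{x})$, verify that this rule lies in $\dfinds$ (full and autonomous), and conclude via $\dfinds \subseteq \C$ and Definition~\ref{def:DyaC}. Your version is in fact somewhat more careful than the paper's, since you explicitly derive $\patoms(\sigma)=\emptyset$ and $\bridge(\sigma)=\front(\sigma)$ from the absence of existential variables and you check the no-repeated-variables requirement inherited from $\indep$, which the paper leaves implicit.
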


\begin{proof}
Consider a datalog rule $\sigma: \Phi(\mathbf{x,y}) \rightarrow P(\mathbf{x})$.
Then, it can be decomposed into the following two rules:
$\hgrule(\sigma): \Phi(\mathbf{x,y}) \rightarrow \mathtt{Aux}_\sigma(\mathbf{x})$ and %
$\mainrule(\sigma): \mathtt{Aux}_\sigma(\mathbf{x}) \rightarrow P(\mathbf{x})$.
Trivially, $\hgrule(\sigma) \in \sighg$;
it remains to show that $\mainrule(\sigma) \in \dfinds$.
According to Definition \ref{def:dfinds}, the full property immediately follows since datalog rules do no have existential variables, whereas the autonomous property holds since $\body(\mainrule(\sigma))$ contains only the fresh predicate $\mathtt{H}$.
In particular, $\mainrule(\sigma) \in \C$, since by assumption $\C \supseteq \dfinds$;
hence, the thesis follows.
\end{proof}

\section{Computational Complexity of Query Answering}\label{sec:complex}
In this section we study the complexity of the $\certEval$ problem over dyadic existential rules.
We start analyzing the data complexity of the problem and then the combined complexity.

\begin{theorem}\label{thm:DataComplexityDyaC}
Consider a class \C of TGDs. 
In data complexity, if $\certEvalC$ belongs to some decidable complexity class $\mathbb{C}$, then the following hold:
\vspace{-2mm}
\begin{itemize}
\item[]
    \begin{enumerate}
        \item If  $\mathbb{C} \subseteq \PTIME$, then $\certEvalDyaC$ is in \PTIME;
        
        \item If  $\mathbb{C} \supseteq \PTIME$, then $\certEvalDyaC$ is in $\PTIME^{\mathbb{C}}$;
                
        \item If  $\mathbb{C} \supseteq \PTIME$ is deterministic%
        \footnote{By \vir deterministic" we mean that $\certEvalDyaC$ can be solved by a deterministic Turing machine. We refer to the classes $\PTIME, \PSPACE, \iEXPTIME, \iEXPSPACE$, for all $i \geq 1$.}
        and $\certEvalC$ is $\mathbb{C}${\footnotesize\textrm{-complete}}, then it holds that $\certEvalDyaC$ is $\mathbb{C}${\footnotesize\textrm{-complete}} too;
        
        \item If $\mathcal{C} \supseteq \dfinds$, then
        $\certEvalDyaC$ is $\PTIMEh$.
    \end{enumerate}
\end{itemize}
\end{theorem}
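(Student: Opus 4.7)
The plan is to analyse the resources consumed by Algorithm~\ref{alg:CertEvalDyaC}, which in turn calls Algorithm~\ref{alg:DpCertEvalC} and hence Algorithm~\ref{alg:DBcomplete}. In data complexity, the ontology $\Sigma$ and the query $q$ are fixed, so the dyadic pair $\Pi = (\sighg, \sigc)$ constructed in instructions~1--3 of Algorithm~\ref{alg:CertEvalDyaC} is also fixed; in particular, both $|\hdpred(\sighg)|$ and $\mu = \max_{P \in \hdpred(\sighg)} \arity(P)$ are constants. First I would observe, as in the termination argument of Proposition~\ref{prop:AlgCorretness}, that $|\grAtDP| \leq |\hdpred(\sighg)| \cdot |\const(D)|^\mu$, which is polynomial in $|D|$. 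Consequently, the outer \textbf{go to}-loop of Algorithm~\ref{alg:DBcomplete} executes at most polynomially many times, and inside each iteration the \textbf{for}-loop scans a fixed number of rules; for each such rule, constructing the answer set requires checking $\mathbf{t} \in \cert(q,\dcompl,\sigc)$ for at most $|\const(D)|^\mu$ tuples $\mathbf{t}$. Altogether, Algorithm~\ref{alg:DBcomplete} performs a polynomial number of calls to $\certEvalC$, each on an input of polynomial size (since $\dcompl \subseteq D \cup \grAtDP$ remains polynomial in $|D|$ and $q$, $\sigc$ are fixed). The final instruction of Algorithm~\ref{alg:DpCertEvalC} is one further call to $\certEvalC$ with the completed database.

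From this common resource analysis I would derive items~(1) and~(2) immediately. For~(1), if $\certEvalC \in \mathbb{C} \subseteq \PTIME$, then every call runs in polynomial time and, since polynomially many calls are performed on polynomially sized inputs, the whole procedure lies in \PTIME. For~(2), if $\certEvalC \in \mathbb{C} \supseteq \PTIME$, each call can be replaced by an oracle query to a problem in $\mathbb{C}$; the driver is a deterministic polynomial-time procedure, so $\certEvalDyaC \in \PTIME^{\mathbb{C}}$.

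For~(3), the upper bound follows from~(2) together with the standard \emph{absorption} identity $\PTIME^{\mathbb{C}} = \mathbb{C}$ that holds for each deterministic class $\mathbb{C} \in \{\PTIME, \PSPACE, \iEXPTIME, \iEXPSPACE : i \geq 1\}$ listed in the footnote. The matching lower bound comes from Proposition~\ref{prop:containment}: since $\C \subseteq \dyac$, the identity mapping is a trivial (\ACzero, data-complexity-preserving) reduction from $\certEvalC$ to $\certEvalDyaC$, which transfers $\mathbb{C}$-hardness. For~(4), I would use Theorem~\ref{thm:DatalogSubsetDyaC}: under the hypothesis $\C \supseteq \dfinds$, we have $\datalog \subseteq \dyac$; since $\certEval[\datalog]$ is well known to be $\PTIMEh$ in data complexity, the identity reduction yields $\PTIMEh$-ness of $\certEvalDyaC$ as well.

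The main obstacle I anticipate is making rigorous the polynomial bound on the number of iterations and oracle queries of Algorithm~\ref{alg:DBcomplete} under the data-complexity convention: the key point is that $\sighg$, $\sigc$ and the queries $\langle \mathbf{x}\rangle \leftarrow \Phi(\mathbf{x},\mathbf{y})$ built at instruction~4 are all fixed, so only the size of $\dcompl$ (bounded by $|D| + |\grAtDP|$) and the number of candidate tuples (bounded by $|\const(D)|^\mu$) depend on the input. Once this is formalised, items~(1)--(4) reduce to routine complexity bookkeeping and to the invocation of Proposition~\ref{prop:containment} and Theorem~\ref{thm:DatalogSubsetDyaC} for the hardness sides.
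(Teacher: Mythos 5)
Your proposal is correct and follows essentially the same route as the paper's proof: both analyse Algorithm~\ref{alg:CertEvalDyaC} via Algorithms~\ref{alg:DpCertEvalC} and~\ref{alg:DBcomplete}, bound $|\grAtDP|$ by $|\hdpred(\sighg)|\cdot|\const(D)|^\mu$ with all ontology-dependent parameters fixed under the data-complexity convention, and conclude with polynomially many oracle calls, the absorption $\PTIME^{\mathbb{C}}=\mathbb{C}$ for deterministic classes, and the hardness transfers via Proposition~\ref{prop:containment} and Theorem~\ref{thm:DatalogSubsetDyaC}. No gaps to report.
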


\begin{proof}
To prove the theorem, we rely on the complexity of Algorithm \ref{alg:CertEvalDyaC}.
Let $D$ be a database, $\Sigma \in \dyac$ an ontology, $q(\mathbf{x})$ a CQ, and $\mathbf{c} \in \DC^{|\mathbf{x}|}$ a tuple.
Moreover, let $d = |\const(D)|$ and $\mu = \max_{P \in \hdpred(\sighg)} \arity(P)$.
As shown in Theorem \ref{thm:certEvalDyacIsDecidable}, Algorithm \ref{alg:CertEvalDyaC} terminates and correctly decides $\certEvalDyaC$.
More specifically, instructions 1 and 2 are introduced in order to construct, respectively, the first component $\hgrule(\Sigma)$ and the second component $\mainrule(\Sigma)$ of a dyadic pair.
The procedure used to build these sets
is polynomial (indeed linear) with respect to the size of the ontology $\Sigma$ in input.
By neglecting the \vir trivial" instruction 3, the computational cost of the Algorithm mainly depends on instruction 4, that is on the invocation of Algorithm \ref{alg:DpCertEvalC}, which in turn invokes Algorithm \ref{alg:DBcomplete} in order to compute the completed database $\dcompl$.
As stated in Proposition \ref{prop:AlgCorretness}, Algorithm \ref{alg:DBcomplete} always terminates.
In particular, the size of the set $\dcompl$ is at most $|\hdpred(\sighg)| \cdot d^{\mu}$; the problem $\certEvalC$, at instruction 5, is called $d^{\mu}$ times, and the \textbf{for}-loop at instruction 3 is executed $|\sighg|$ times.
Therefore, by ignoring the computational costs of the oracle (i.e., checking whether $\mathbf{t} \in \cert(q,\dcompl,\sigc)$), Algorithm \ref{alg:DBcomplete} overall performs a number of step that is linear in $|\sighg| \cdot |\hdpred(\sighg)| \cdot d^{2\mu}$. Indeed, this value is also an upper bound for the number of calls to the oracle.
Since we are in data complexity, the following parameters are bounded: the maximum arity $\mu$, the size of the sets $\sighg$ and $\sigc$, as well as the size and the number of each query $q$ constructed at instruction 4 of Algorithm \ref{alg:DBcomplete}. Hence, the latter calls polynomially many times the problem $\certEvalC$.
Accordingly, Algorithm \ref{alg:CertEvalDyaC} is polynomial and in turn it invokes polynomially many times an oracle to compute \certEvalC.
Hence, if $\mathbb{C} \subseteq \PTIME$, trivially $\certEvalDyaC \in \PTIME$; whereas, if $\mathbb{C} \supseteq \PTIME$,   $\certEvalDyaC \in \PTIME^{\mathbb{C}}$.
To prove point 3 of the theorem, we observe that the membership follows from point 2 and from the fact that, for any deterministic class $\mathbb{C} \supseteq \PTIME$, it holds that $\PTIME^{\mathbb{C}} = \mathbb{C}$;
whereas, the hardness derives from Proposition \ref{prop:containment}, since $\dyac$ includes the class \C, that is $\mathbb{C}${\footnotesize\textrm{-hard}} by assumption.
Finally, to prove point 4, we recall that by Theorem \ref{thm:DatalogSubsetDyaC}, $\datalog \subseteq \dyac$; hence, since $\certEval[\datalog]$ is $\PTIMEh$, the thesis follows.  
\end{proof}

\noindent Accordingly to the above theorem, immediately we get the following result.

\begin{table}[t!]
	\centering
	\caption{Data complexity comparison of $\certEvalC$ with $\certEvalDyaC$. }\label{table:DataComplexityComparison}
	{\tablefont\begin{tabular}{@{\extracolsep{\fill}}lcc}
		\topline
		Class \C & \C & \dyac
		\midline
		$\wfrguarded$
		& $\EXPTIMEc$ & $\EXPTIMEc$ \\
		$\frguarded$
		& $\PTIMEc$ & $\PTIMEc$ \\
		$\weaklyacy$
		& $\PTIMEc$ & $\PTIMEc$ \\
		$\joinacy$
		& $\PTIMEc$ & $\PTIMEc$ \\
		$\datalog$
		& $\PTIMEc$ & $\PTIMEc$ \\
		$\shy$
		& $\PTIMEc$ & $\PTIMEc$ \\
		$\ward$
		& $\PTIMEc$ & $\PTIMEc$ \\
		$\prot$
		& $\PTIMEc$ & $\PTIMEc$ \\
		%
		$\stickyj$
		& $\ACzero$ & $\PTIMEc$ \\
		$\linear$
		& $\ACzero$ & $\PTIMEc$ \\
		$\joinless$
		& $\ACzero$ & $\PTIMEc$ \\
		$\indep$
		& $\ACzero$ & $\PTIMEc$ 
	\botline
\end{tabular}}
\end{table}

\begin{corollary}
Complexity results in Table~\ref{table:DataComplexityComparison} do hold.
\end{corollary}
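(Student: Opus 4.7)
The plan is to derive the corollary as a direct case-by-case application of Theorem~\ref{thm:DataComplexityDyaC}, partitioning the twelve classes listed in Table~\ref{table:DataComplexityComparison} according to the data complexity $\mathbb{C}$ of $\certEvalC$. First I would record two standing facts that apply uniformly: $(a)$ by Proposition~\ref{prop:dfindsSubseteqC}, every class $\C \in \{\wfrguarded, \frguarded, \weaklyacy, \joinacy, \datalog, \shy, \ward, \prot, \stickyj, \linear, \joinless, \indep\}$ satisfies $\C \supseteq \dfinds$; and $(b)$ each of the complexity classes that appears as the data complexity of some $\certEvalC$ in Table~\ref{table:classicalComplexity}, namely $\ACzero$, $\PTIMEc$, $\EXPTIMEc$, is either contained in $\PTIME$ or is a deterministic class including $\PTIME$.

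I would then split into two cases. For the entries whose data complexity is $\mathbb{C} \supseteq \PTIME$ (rows $1$ through $8$ of Table~\ref{table:DataComplexityComparison}: $\wfrguarded$ with $\EXPTIMEc$, and the remaining seven with $\PTIMEc$), the membership in $\mathbb{C}$ follows from point $3$ of Theorem~\ref{thm:DataComplexityDyaC}, since $\EXPTIME$ and $\PTIME$ are both deterministic. The matching hardness in $\mathbb{C}$ is inherited from $\certEvalC$ itself via the containment $\C \subseteq \dyac$ of Proposition~\ref{prop:containment}. This yields exactly the values in the right-hand column for the first eight rows.

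For the entries whose data complexity is $\ACzero$ (rows $9$ through $12$: $\stickyj$, $\linear$, $\joinless$, $\indep$), since $\ACzero \subseteq \PTIME$, point $1$ of Theorem~\ref{thm:DataComplexityDyaC} gives $\certEvalDyaC \in \PTIME$. For $\PTIMEh$, I invoke fact $(a)$ above together with point $4$ of the theorem, concluding that $\certEvalDyaC$ is $\PTIMEc$ for each of these four classes, matching the right-hand column.

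I do not expect a significant obstacle here: the work is entirely bookkeeping, provided that one carefully checks, for each row, that the hypothesis of the relevant point of Theorem~\ref{thm:DataComplexityDyaC} is met, i.e.\ that the data complexity class from Table~\ref{table:classicalComplexity} is either contained in $\PTIME$, or is a deterministic class that includes $\PTIME$ and for which $\certEvalC$ is complete. The only mildly non-trivial point is the use of Proposition~\ref{prop:dfindsSubseteqC} to justify that the $\PTIME$ lower bound applies to the four $\ACzero$ entries, since this is precisely where dyadisation genuinely increases the data complexity relative to $\C$.
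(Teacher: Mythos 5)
Your proposal is correct and matches the paper's intended argument: the paper derives the corollary ``immediately'' from Theorem~\ref{thm:DataComplexityDyaC}, and your case split (point~3 for the eight rows with deterministic data complexity $\supseteq \PTIME$, points~1 and~4 together with Proposition~\ref{prop:dfindsSubseteqC} for the four $\ACzero$ rows) is exactly the bookkeeping the authors leave implicit. No gaps.
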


For studying the combined complexity, we need to take into account the fact that the database returned by Algorithm~\ref{alg:DBcomplete} (namely, $\dcompl$) is exponential with respect to the input one (namely, $D$). 
Indeed, the check $\mathbf{c} \in \cert(q,\dcompl,\sigc)$ performed by  Algorithm~\ref{alg:DpCertEvalC} is done on an exponentially bigger database.
Thus, in case $\certEvalC$ would have the same data complexity and combined complexity, it might happen that the combined complexity of $\certEvalDyaC$ could be exponentially higher that the one of $\certEvalC$.
Although all the considered classes in $\mathbb{E}^+_{\mathit{syn}}$ do not suffer from this shortcoming, before stating our general result, we need to focus on \vir well-behaved'' classes of TGDs.
A class $\C$ of TGDs enjoys the {\em dropping data-complexity} property if there is an exponential jump from the combined complexity of $\certEvalC$ to the data complexity of $\certEvalC$.

\begin{proposition}
Each class in $\mathbb{E}_{\mathit{syn}}$ enjoys the dropping data-complexity property.
\end{proposition}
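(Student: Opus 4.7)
The plan is to verify the exponential-jump condition by a finite case analysis over the fifteen syntactic classes in $\mathbb{E}_{\mathit{syn}}$. For each class, I would consult its known data and combined complexities of $\certEvalC$. Twelve of these are tabulated directly in Table~\ref{table:classicalComplexity}, while the remaining three---$\sticky$, $\guarded$, and $\wguarded$---inherit well-known complexities from the same body of literature already cited in Proposition~\ref{prop:basicCont}. The task then reduces to recognising that each pair $(\mathbb{C}_d, \mathbb{C}_c)$ of (data, combined) complexities exhibits the required exponential gap.

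Concretely, I would partition the fifteen classes according to the resulting complexity pattern: $(\ACzero, \PSPACEc)$ for $\linear$, $\joinless$, $\indep$; $(\ACzero, \EXPTIMEc)$ for $\stickyj$ and $\sticky$; $(\PTIMEc, \EXPTIMEc)$ for $\datalog$, $\shy$, $\ward$, $\prot$; $(\PTIMEc, \TwoEXPTIMEc)$ for $\frguarded$, $\guarded$, $\weaklyacy$, $\joinacy$; and $(\EXPTIMEc, \TwoEXPTIMEc)$ for $\wfrguarded$ and $\wguarded$. In every pattern, the jump from $\mathbb{C}_d$ to $\mathbb{C}_c$ strictly admits an exponential increase in available running time: the separations $\PTIME \subsetneq \EXPTIME \subsetneq \TwoEXPTIME$ come from the deterministic time hierarchy theorem, while for the $(\ACzero, \PSPACEc)$ pattern one relies on the fact that $\PSPACE$ contains problems (e.g.\ \textsc{QBF}) not computable in $\ACzero$ by the Furst--Saxe--Sipser separation, together with the observation that $\PSPACE$-complete problems can require time up to $2^{n^{\Omega(1)}}$.

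The most delicate point, and the one I would treat with care, is pinning down the precise formalisation of ``exponential jump from combined to data'' underlying the definition that immediately precedes the proposition. I would read it as saying that the generic upper bound provided by $\mathbb{C}_c$ exceeds that of $\mathbb{C}_d$ by at least one exponential in time, which is exactly what is needed to absorb the exponential blow-up from $D$ to $\dcompl$ observed earlier in the section. Under this reading, each of the five patterns above satisfies the condition unconditionally, and the proposition follows by straightforward inspection. No new complexity-theoretic machinery is required; the remainder of the argument is the routine tabulation of the fifteen classes and verification that each of them matches one of the listed patterns.
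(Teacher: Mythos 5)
Your proposal is correct and matches the paper's (implicit) argument: the paper states this proposition without a written proof, relying exactly on the inspection of the known (data, combined) complexity pairs for the fifteen classes of Table~\ref{table:classicalComplexity}, which is precisely the tabulation you carry out. Your extra care in reading ``exponential jump'' as an at-least-one-exponential gap in the generic time bound is the intended reading, since it is what the subsequent combined-complexity theorem uses to absorb the exponential blow-up from $D$ to $\dcompl$.
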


We can now state the last result of the section, providing the combined complexity of problem $\certEval$ over $\dyac$ sets of TGDs.

\begin{theorem}
Consider a class \C of TGDs. 
In combined complexity, if $\certEvalC$ belongs to some decidable complexity class $\mathbb{C}$ and $\C$ enjoys the dropping data-complexity property, then the following hold:
\vspace{-2mm}
\begin{itemize}
\item[]
    \begin{enumerate}
        \item If  $\mathbb{C} \subseteq \EXPTIME$, then $\certEvalDyaC$ is in \EXPTIME;
        
        \item If  $\mathbb{C} \supseteq \EXPTIME$, then $\certEvalDyaC$ is in $\EXPTIME^{\mathbb{C}}$;
        
        \item If  $\mathbb{C} \supseteq \EXPTIME$ is deterministic and $\certEvalC$ is $\mathbb{C}${\footnotesize\textrm{-complete}},
        then it holds that $\certEvalDyaC$ is $\mathbb{C}${\footnotesize\textrm{-complete}} too;
        
        \item If $\mathcal{C} \supseteq \dfinds$, then
        \certEvalDyaC is \EXPTIME{\footnotesize\textrm{-hard}}.
    \end{enumerate}
\end{itemize}
\end{theorem}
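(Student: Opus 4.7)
The approach is to parallel the proof of Theorem~\ref{thm:DataComplexityDyaC}, analysing Algorithm~\ref{alg:CertEvalDyaC} but relaxing the boundedness assumptions. Fix as input a database $D$, an ontology $\Sigma \in \dyac$, a CQ $q(\mathbf{x})$, and a tuple $\mathbf{c}$, and let $n$ denote the overall input size, $d = |\const(D)|$, and $\mu = \max_{P \in \hdpred(\sighg)} \arity(P)$. Instructions~1--3 build the dyadic pair $\Pi = (\hgrule(\Sigma), \mainrule(\Sigma))$ in linear time in $|\Sigma|$, so the cost is concentrated in the invocation of $\algDpCertEvalC$, which by Proposition~\ref{prop:AlgCorretness} runs $\algCompleteC$ to build $\dcompl$ via oracle calls to $\certEvalC$.

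The plan is to reuse the quantitative bookkeeping from the data-complexity proof: $|\dcompl| \leq |\hdpred(\sighg)| \cdot d^{\mu}$, and Algorithm~\ref{alg:DBcomplete} triggers at most $|\sighg| \cdot |\hdpred(\sighg)| \cdot d^{2\mu}$ oracle calls. In combined complexity $\mu$ is no longer bounded, so both quantities collapse to $2^{O(\mathit{poly}(n))}$. The key step is to bound each individual oracle call $\certEvalC(q',\dcompl,\sigc)$ in which $\dcompl$ is of exponential size but $q'$ and $\sigc$ remain polynomial in $n$. This is where the dropping data-complexity hypothesis on $\C$ enters: because data complexity of $\certEvalC$ sits one exponential below its combined complexity, the presence of a single exponentially large database together with a polynomial-sized ontology and query keeps the runtime bounded by a function in $\mathbb{C}$ evaluated on the original polynomial parameters. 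Summing over exponentially many such calls yields the desired $\EXPTIME$ (point~1) or $\EXPTIME^{\mathbb{C}}$ (point~2) upper bound.

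Points~3 and~4 then follow by standard arguments. For point~3, the membership comes from point~2 combined with $\EXPTIME^{\mathbb{C}} = \mathbb{C}$ whenever $\mathbb{C}$ is deterministic and contains $\EXPTIME$, whereas the matching hardness is inherited from $\certEvalC$ through Proposition~\ref{prop:containment}, which guarantees $\C \subseteq \dyac$. For point~4, Theorem~\ref{thm:DatalogSubsetDyaC} gives $\datalog \subseteq \dyac$ whenever $\C \supseteq \dfinds$, and since $\certEval[\datalog]$ is already $\EXPTIMEh$ in combined complexity, so is $\certEvalDyaC$.

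The main obstacle is turning the qualitative \emph{dropping data-complexity property} into a usable quantitative bound on a single oracle call in which the database is exponential but ontology and query are polynomial. The natural articulation is that the combined complexity of $\certEvalC$ factorises into a $\mathbb{C}$-bounded cost in the ontology/query size times a cost that is one exponential lower in the database size; under this reading, feeding an exponential database and a polynomial ontology/query to $\certEvalC$ lands within $\mathbb{C}$ evaluated on the original input size, exactly as needed. Making this step airtight---rather than inspecting each class of $\mathbb{E}_{\mathit{syn}}$ separately as the corollary after the theorem presumably does---is the delicate part of the argument.
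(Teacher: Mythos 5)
Your proposal is correct and follows essentially the same route as the paper: the same oracle-call bookkeeping on Algorithm~\ref{alg:DBcomplete}, the same use of the dropping data-complexity property to argue that a single $\certEvalC$ call on the exponentially inflated $\dcompl$ (with polynomial-size ontology and query) stays within the intended bound, and the same treatment of points~3 and~4 via $\EXPTIME^{\mathbb{C}} = \mathbb{C}$, Proposition~\ref{prop:containment}, and Theorem~\ref{thm:DatalogSubsetDyaC}. The only minor difference is that the paper observes point~2 already follows from the oracle-call count alone (the $\mathbb{C}$-oracle answers exponentially large instances in one step), reserving the dropping-data-complexity argument for the memberships in points~1 and~3, whereas you invoke it for point~2 as well; this is harmless.
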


\begin{table}[b!]
	\centering
	\caption{Combined complexity comparison of $\certEvalC$ with $\certEvalDyaC$. }\label{table:CombinedComplexityComparison}
	{\tablefont\begin{tabular}{@{\extracolsep{\fill}}lcc}
		\topline
		Class \C & \C & \dyac
		\midline
		$\wfrguarded$
		& $\TwoEXPTIMEc$ & $\TwoEXPTIMEc$  \\
		$\frguarded$
		& $\TwoEXPTIMEc$ & $\TwoEXPTIMEc$ \\
		$\weaklyacy$
		& $\TwoEXPTIMEc$ & $\TwoEXPTIMEc$ \\
		$\joinacy$
		& $\TwoEXPTIMEc$ & $\TwoEXPTIMEc$ \\
		$\datalog$
		& $\EXPTIMEc$ & $\EXPTIMEc$ \\
		$\shy$
		& $\EXPTIMEc$ & $\EXPTIMEc$ \\
		$\ward$
		& $\EXPTIMEc$ & $\EXPTIMEc$ \\
		$\prot$
		& $\EXPTIMEc$ & $\EXPTIMEc$ \\
		%
		$\stickyj$
		& $\EXPTIMEc$ & $\EXPTIMEc$ \\
		$\linear$
		& $\PSPACEc$ & $\EXPTIMEc$ \\
		$\joinless$
		& $\PSPACEc$ & $\EXPTIMEc$ \\
		$\indep$
		& $\PSPACEc$ &  $\EXPTIMEc$
	\botline
\end{tabular}}
\end{table}

\begin{proof}
The argument proceeds similarly to proof of Theorem \ref{thm:DataComplexityDyaC} by arguing on Algorithm \ref{alg:CertEvalDyaC} to determine the complexity of $\certEvalDyaC$.
Let $D$ be a database, $\Sigma \in \dyac$ an ontology, $q(\mathbf{x})$ a CQ, and $\mathbf{c} \in \DC^{|\mathbf{x}|}$ a tuple.
Moreover, let $d = |\const(D)|$ and $\mu = \max_{P \in \hdpred(\sighg)} \arity(P)$.
As previously shown, Algorithm \ref{alg:CertEvalDyaC} invokes Algorithm \ref{alg:DpCertEvalC}, which in turn invokes Algorithm \ref{alg:DBcomplete}.
Concerning the latter, by ignoring the computational costs of the oracle, it overall performs a number of step that is linear in $|\sighg| \cdot |\hdpred(\sighg)| \cdot d^{2\mu}$. Indeed, also in this case, this value is an upper bound for the number of calls to the oracle.
\change{This is enough to show point $2$.} 

\change{Concerning the memberships of point $1$ and point $3$,}
differently from the proof of Theorem \ref{thm:DataComplexityDyaC}, in combined complexity the maximum arity $\mu$, the size of the sets $\sighg$ and $\sigc$, as well as the size and the number of each query constructed at instruction 4 of Algorithm \ref{alg:DBcomplete} are not bounded.
Accordingly, also the size of the completed database returned by Algorithm \ref{alg:DBcomplete} (namely $\dcompl$) \rem{becomes}
\change{may become} exponential with \change{respect to} the input.
\change{More precisely, $|\const(\dcompl)| = d$ and $|\dcompl| \leq |\hdpred(\sighg)|\cdot d^\mu + |D|$. 
Let $n$ generically denote the size $||seq||$ of any sequence  $\mathit{seq}$ of objects given in input to $\certEvalC$.
We can now consider the cost function $g(n)$ (resp., $f(n)$) of some algorithm/oracle that decides $\certEvalC$ and shows that it belongs to $\mathbb{C}$ (resp., $\mathbb{C}_d$) 
in combined (resp., data) complexity.
According to the dropping data-complexity property, we know that $g(n)$ grows at least exponentially faster than $f(n)$.
Essentially, there is an exponential jump from $\mathbb{C}_d$ to $\mathbb{C}$ that does not depend on the size of the input database but only on the size of other parameters, namely the ontology, the query and the tuple of constants.
Consider now the query
$q':= \langle \mathbf{x} \rangle \leftarrow \Phi(\mathbf{x,y})$ constructed at instruction 4 of Algorithm~\ref{alg:DBcomplete} (we call it $q'$ to avoid confusion with $q(\mathbf{x})$ mentioned at the beginning of this proof).
At instruction 5 of the same algorithm, the oracle for $\certEvalC$ checks whether $\mathbf{t} \in \cert(q',\dcompl,\sigc)$ holds.
Since $g(n)$ grows at least exponentially faster than $f(n)$, we get that  $g(||\mathbf{t},q',\dcompl,\sigc||)$ remains of the same exponential order of $g(||\mathbf{t},q',D,\sigc||)$, although $||D^+||$ may be  exponentially larger than $||D||$.}
\rem{the dependence on the database, in the cost-function of $\certEvalC$, is at least exponentially smaller than other parameters appearing in the cost-function and, 
Thus, also bloating exponentially the database size the combined complexity
of $\certEvalDyaC$ is not exponentially higher that the one of $\certEvalC$.}
%
%
%
%
%

Regarding point $1$, if $\mathbb{C} \subseteq \EXPTIME$, then 
\change{we know that a $\mathbb{C}$-oracle for $\certEvalC$ works at most exponentially in $||\mathbf{t},q',\sigc||$ and  at most polynomially (resp., exponentially) in $||\dcompl||$ (resp., $||D||$); thus, in this  case, the $\mathbb{C}$-oracle cannot reach  double-exponential time but it remains exponential. Therefore, $\certEvalDyaC \in \EXPTIME$.}

%
\change{For the memberships of point 3, we already know that $\certEvalDyaC$ is in $\EXPTIME^{\mathbb{C}}$. Consider now a
$\mathbb{C}$-oracle $O$ for $\certEvalC$ characterised by the cost function $g(n)$.
If $\mathbb{C} \supseteq \EXPTIME$ is  deterministic, then $O$ works with respect to $||\mathbf{t},q',\sigc||$ in an exponentially faster way than with respect to $||\dcompl||$; thus, also in this  case, $O$ cannot exceed the power of $\mathbb{C}$. Therefore, $\EXPTIME^{\mathbb{C}}$, in a sense, collapses to $ \mathbb{C}$.}

Finally, we conclude the proof by considering the hardness of points 3 and 4. 
In the first case, we observe that it derives from Proposition \ref{prop:containment}, since $\dyac$ includes the class \C, that is $\mathbb{C}${\footnotesize\textrm{-hard}} by assumption.
For point 4, we recall that by Theorem \ref{thm:DatalogSubsetDyaC}, $\datalog \subseteq \dyac$; hence, since $\certEval[\datalog]$ is $\EXPTIME${\footnotesize\textrm{-hard}}, it follows the thesis. \end{proof}

The following immediately derives from above theorem.

\begin{corollary}
Complexity results in Table~\ref{table:CombinedComplexityComparison} do hold.
\end{corollary}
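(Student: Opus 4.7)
The plan is to derive the corollary row-by-row from the preceding theorem. Two global prerequisites must be established before applying the theorem to any entry: each class $\C$ in Table~\ref{table:CombinedComplexityComparison} must enjoy the dropping data-complexity property, and each such $\C$ must satisfy $\C \supseteq \dfinds$ so that point~4 becomes available for the hardness part. Both prerequisites are already at hand: the first follows immediately from the preceding proposition, which asserts that every class in $\mathbb{E}_{\mathit{syn}}$ enjoys this property, and the second follows from Proposition~\ref{prop:dfindsSubseteqC}, which gives $\dfinds \subseteq \C$ for every class in $\mathbb{E}_{syn} \cup \mathbb{E}_{sem}$.

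Having ensured these two conditions uniformly, I would split the twelve rows into two regimes according to the combined complexity $\mathbb{C}$ of $\certEvalC$. The first regime covers the nine rows where $\mathbb{C} \supseteq \EXPTIME$ and $\certEvalC$ is $\mathbb{C}${\footnotesize\textrm{-complete}} for a deterministic $\mathbb{C}$, namely $\wfrguarded$, $\frguarded$, $\weaklyacy$, $\joinacy$ (each $\TwoEXPTIMEc$), together with $\datalog$, $\shy$, $\ward$, $\prot$, $\stickyj$ (each $\EXPTIMEc$). For all nine, point~3 of the theorem applies directly and yields that $\certEvalDyaC$ is $\mathbb{C}${\footnotesize\textrm{-complete}}, matching the second column of the table exactly.

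The second regime covers the three remaining rows, $\linear$, $\joinless$, and $\indep$, for which $\mathbb{C} = \PSPACE$. Since $\PSPACE \subseteq \EXPTIME$, point~1 of the theorem gives $\certEvalDyaC \in \EXPTIME$. For hardness, point~4 applies because each of these classes contains $\dfinds$ by the global prerequisite, hence $\certEvalDyaC$ is $\EXPTIME${\footnotesize\textrm{-hard}}. Combining the two bounds yields $\EXPTIMEc$ for each of the three rows, again matching the table.

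No step should present a genuine mathematical obstacle: the corollary is essentially a bookkeeping reduction that maps each entry of the table into the appropriate case of the preceding theorem. The only point that needs slight care is recognising that the jump from $\PSPACE$-complete combined complexity to $\EXPTIME$-complete combined complexity for $\dyac$ is not an artifact but an intrinsic consequence of adding $\datalog$-like expressive power through the head-ground component, and this is witnessed precisely by the $\dfinds$-based lower bound invoked above.
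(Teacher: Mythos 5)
Your proposal is correct and takes the same route as the paper, which simply states that the corollary ``immediately derives from the above theorem'' without further detail; your case split (point~3 for the nine rows with deterministic $\mathbb{C}\supseteq\EXPTIME$, and points~1 and~4 combined for the three $\PSPACEc$ rows, with the dropping data-complexity property and $\dfinds\subseteq\C$ supplied by the preceding propositions) is exactly the bookkeeping the paper leaves implicit.
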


\section{Conclusion}
Dyadic decomposable sets form a novel decidable class of TGDs that encompasses and generalises all the existing (syntactic and semantic) decidable classes of TGDs. 
In the near feature, it would be interesting to implement a prototype for dyadic existential rules by exploiting different kinds of existing reasoners.

%
%

\bibliographystyle{tlplike}
\bibliography{bibtex_new}

\end{document}